\DeclareMathAlphabet{\mathbfi}{OML}{cmm}{b}{it}
\let\originalleft\left
\let\originalright\right
\renewcommand{\left}{\mathopen{}\mathclose\bgroup\originalleft}
\renewcommand{\right}{\aftergroup\egroup\originalright}
\newcommand{\biggg}{\bBigg@\thr@@}
\newcommand{\Biggg}{\bBigg@{3.5}}
\newenvironment{equations}[1][]{\subequations\ifx\relax#1\relax\else\label{#1}\fi\align\ignorespaces}{\endalign\ignorespacesafterend\endsubequations}
\def\@spliteq#1{\begin{equation}\begin{split}#1\end{split}\end{equation}}
\def\@spliteqstar#1{\begin{equation*}\begin{split}#1\end{split}\end{equation*}}
\def\splitequation{\collect@body\@spliteq}
\def\csname splitequation*\endcsname{\collect@body\@spliteqstar}
\def\csname endsplitequation*\endcsname{\ignorespacesafterend}
\renewcommand{\vec}[1]{{\ifnum9<1#1\mathbf{#1}\else\ifcat\noexpand#1\relax\boldsymbol{#1}\else\mathbfi{#1}\fi\fi}}
\newcommand{\mathe}{\mathrm{e}}
\newcommand{\mathi}{\mathrm{i}}
\let\oldre\Re
\let\oldim\Im
\renewcommand{\Re}{\oldre\mathfrak{e}\,}
\renewcommand{\Im}{\oldim\mathfrak{m}\,}
\newcommand{\total}{\mathop{}\!\mathrm{d}}
\newcommand{\abs}[1]{{\left\lvert{#1}\right\rvert}}
\newcommand{\norm}[1]{{\left\lVert{#1}\right\rVert}}
\newcommand{\sgn}{\operatorname{sgn}}
\newcommand{\artanh}{\operatorname{artanh}}
\newcommand{\1}{\mathbbm{1}}
\newcommand{\eqend}[1]{\,#1}
\newcommand{\bigo}[1]{\mathcal{O}\left({#1}\right)}
\DeclareMathOperator*{\Res}{Res}
\newcommand{\expect}[1]{\left\langle{#1}\right\rangle}
\newcommand{\bessel}[3]{\mathop{}\!\mathrm{#1}_{#2}\left(#3\right)}
\newcommand{\hypergeom}[2]{\,{}_{#1}\mathrm{F}_{#2}}
\newcommand{\pvalue}{\operatorname{pv}}
\DeclareRobustCommand*{\citeDLMFeq}{\hyper@normalise\citeDLMFeq@}
\def\citeDLMFeq@#1#2{\cite[\hyper@linkurl{Eq.~#1.#2}{https://dlmf.nist.gov/#1.E#2}]{DLMF}}
\gdef\@fpheader{\strut}
\newtheorem{lemma}{Lemma}
\begin{document}

\title{Modular Hamiltonian for fermions of small mass}

\author{Daniela Cadamuro,}
\author{Markus B. Fröb and}
\author{Christoph Minz}

\affiliation{Institut f{\"u}r Theoretische Physik, Universit{\"a}t Leipzig, Br{\"u}derstra{\ss}e 16, 04103 Leipzig, Germany}

\emailAdd{cadamuro@itp.uni-leipzig.de}
\emailAdd{mfroeb@itp.uni-leipzig.de}
\emailAdd{minz@itp.uni-leipzig.de}

\abstract{We consider the algebra of massive fermions restricted to a diamond in two-dimensional Minkowski spacetime, and in the Minkowski vacuum state. While the massless modular Hamiltonian is known for this setting, the derivation of the massive one is an open problem. We compute the small-mass corrections to the modular Hamiltonian in a perturbative approach, finding some terms which were previously overlooked. Our approach can in principle be extended to all orders in the mass, even though it becomes computationally challenging.}


\maketitle

\section{Introduction}

Modular theory is an important tool in the mathematical formulation of quantum field theory \cite{Takesaki:1970,Araki:1970,Fredenhagen:1984,Borchers:2000,Witten:2018,Longo:2022}. Especially in recent years, it gained further interest in applications in physics, because it serves as a method to compute relative entropies in quantum systems~\cite{CasiniHuerta:2009entropy,Hollands:2018,AbtErdmenger:2018,CasiniGrilloPontello:2019,Longo:2020,ErdmengerFriesReyesSimon:2020,DAngelo:2021,Bostelmann:2022,MorinelliTanimotoWegener:2022,CiolliLongoRanalloRuzzi:2022,GarbarzPalau:2022,HuertaVanDerVelde:2023,GalandaMuchVerch:2023,FroebMuchPapadopoulos:2023,DAngelo:2023}. For this purpose, one needs to know the modular Hamiltonian $H$ for the state and algebra under consideration. Unfortunately, explicit analytic expressions for modular Hamiltonians are only known in some cases, such as the algebra of quantum fields inside wedge regions in the Minkowski vacuum~\cite{BisognanoWichmann:1976}, or the algebra of free, massless scalar fields inside double cones or diamonds in the Minkowski vacuum~\cite{HislopLongo:1981}; see also~\cite{Froeb:2023} for conformal fields in de~Sitter spacetime and more references to earlier work.

For the special case of massless (Majorana) fermions in $(1+1)$-dimensional Minkowski spacetime, the situation is better, and many more results are known. In particular, the modular Hamiltonian has been determined for multi-component regions in Minkowski~\cite{CasiniHuerta:2009,Rehren:2010,Hollands:2021}, on a flat cylinder~\cite{KlichVamanWong:2017} and on a torus~\cite{BlancoPerezNadal:2019,FriesReyes:2019,BlancoGarbarzPerezNadal:2019}. However, results for massive fields have been obtained only numerically with lattice computations~\cite{AriasBlancoCasiniHuerta:2017,EislerDiGiulioTonniPeschel:2020,JaverzatTonni:2022} and other approaches~\cite{BostelmannCadamuroMinz:2023}. In this work, we want to obtain analytic results for massive fermions.

Namely, we consider the example of free, massive fermions inside a double cone or diamond of size $\ell$ in $(1+1)$-dimensional Minkowski spacetime. This region is the causal closure of the interval $V = [-\ell,\ell]$ on the Cauchy surface $t = 0$, on which initial data is given, see figure~\ref{fig:Diamond}. To obtain analytic results in the massive theory, we employ perturbation theory and compute the first-order corrections to the known massless result~\cite{CasiniHuerta:2009}. For free Majorana fermions\footnote{Since a Dirac fermion can be decomposed into two Majorana fermions, this restriction does not entail a loss of generality.}, it is possible to determine the modular Hamiltonian directly from the two-point function according to~\cite{Araki:1970,Peschel:2003,CasiniHuerta:2009}
\begin{equation}
\label{eq:ModularHamiltonian.Region}
H_V = - \ln\left( \mathcal{G}_V^{-1} - \1_V \right) = 2 \mathi \arctan\Bigl[ \mathi \bigl( \1_V - 2 \mathcal{G}_V \bigr) \Bigr] \eqend{.}
\end{equation}
This relation needs to be understood as an equality between convolution operators acting on initial data restricted to the interval $V$. Namely, it relates the integral kernel of the modular Hamiltonian $H_V$ with the integral kernel $\mathcal{G}_V$, which is the restriction of the Wightman two-point function
\begin{equation}
\label{eq:TwoPointFunction.Wightman}
\mathcal{G}(x,y) = \expect{ \psi(x) \psi^\dagger(y) } \eqend{,}
\end{equation}
evaluated on the Cauchy surface $t = 0$, to the interval $V$. It is known~\cite[Lemma~3.2]{Araki:1970} that the operator with integral kernel $\mathcal{G}_V$ is a bounded operator, whose spectrum is contained in the interval $[0,1]$. Moreover, $0$ and $1$ are not eigenvalues~\cite[Corollary~4.10]{Araki:1970}, and hence we can make sense of the formula~\eqref{eq:ModularHamiltonian.Region} using spectral theory.
\begin{figure}
\includegraphics{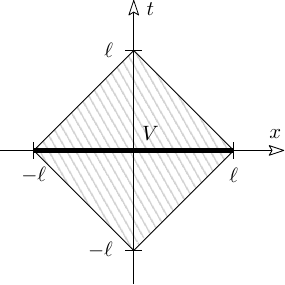}
\caption{The double cone of size $\ell$ in $(1+1)$-dimensional Minkowski spacetime, together with the interval $V = [-\ell,\ell]$ on the initial-data Cauchy surface.}
\label{fig:Diamond}
\end{figure}

For massless fermions, the spectral decomposition of $\mathcal{G}_V$ was determined in~\cite{CasiniHuerta:2009}, and the modular Hamiltonian computed using the formula~\eqref{eq:ModularHamiltonian.Region}. Building on these results, we treat the massive case in perturbation theory and compute explicitly the first-order corrections to the modular Hamiltonian. In principle, our approach can be used to compute the massive modular Hamiltonian to any order, even though the computations quickly become computationally difficult.

On the other hand, there exists a general formula for the modular Hamiltonian in the context of standard subspaces of Hilbert spaces~\cite{FiglioliniGuido:1989,FiglioliniGuido:1994,Longo:2022}. While formula~\eqref{eq:ModularHamiltonian.Region} only gives a result for the modular Hamiltonian restricted to the interval $V$, the general formula makes no such restriction. However, formula~\eqref{eq:ModularHamiltonian.Region} is much easier to use for practical computations. Obviously, it is important to know that the two formulas give the same result (when restricted to the interval $V$), and we show in section~\ref{sec:TomitaModularHamiltonian} that this indeed holds. In section~\ref{sec:FreeFermions} we compute the Wightman function for free massive fermions, briefly explain the spectral calculus for a purely continuous spectrum in subsection~\ref{sec:SpectralDecomposition}, and revisit the spectral decomposition of the massless Wightman function and the computation of the massless modular Hamiltonian in subsection~\ref{sec:ModularHamiltonian.Massless}. Section~\ref{sec:MassiveModularHamiltonian} is devoted to our main result, the computation of the massive modular Hamiltonian for small masses in perturbation theory. We conclude in section~\ref{sec:Discussion}, and leave the detailed and intricate calculations of various integrals that are needed for our result to the appendices~\ref{appx:MasslessDecomposition} and~\ref{appx:FirstOrderMassiveCorrection}.

\section{From the fermionic Tomita operator to the modular Hamiltonian}
\label{sec:TomitaModularHamiltonian}

We first show that for free fermions, the formula~\eqref{eq:ModularHamiltonian.Region} is equivalent to the general expression for the modular Hamiltonian $H_V$ that was obtained in~\cite{FiglioliniGuido:1989,FiglioliniGuido:1994,Longo:2022} for an arbitrary subspace region $V$, when restricted to the corresponding subspace.

Let us introduce the expression of the modular generator from~\cite{FiglioliniGuido:1989}. Since we consider the modular operator for a free field theory, the modular operator on Fock space is the second quantisation of the one-particle modular operator $\varDelta$~\cite{Foit:1983,FiglioliniGuido:1994}, hence it suffices to understand the one-particle structure. Consider thus the one-particle Hilbert space $\mathcal{H}$ with complex structure $I$, which is an operator acting on $\mathcal{H}$ that satisfies $I^2 = - \1$ and $I^\dagger = - I$.\footnote{While $\mathi \1$ is an operator that fulfills these conditions if $\mathcal{H}$ is a complex Hilbert space, in order for the subspace $\mathcal{L}$ to be standard a different complex structure is required. This structure is related to the two-point function as discussed below.} Consider furthermore a real-linear, standard subspace $\mathcal{L}$, which means that it is separating ($\mathcal{L} \cap I \mathcal{L} = \{ 0 \}$) and cyclic ($\overline{\mathcal{L} + I \mathcal{L}} = \mathcal{H}$). The Tomita operator $T$ for the subspace $\mathcal{L}$ is the closure of the map (for all $h, k \in \mathcal{L}$)
\begin{equation}
\label{eq:TomitaOperator}
h + I k \mapsto h - I k \eqend{.}
\end{equation}
Its polar decomposition $T = J \varDelta^{1/2}$ is given by the modular conjugation $J$ (an antiunitary involution) and the modular operator $\varDelta$ (a positive, non-singular, selfadjoint, linear operator).

The orthogonal projector $E$ on the standard subspace $\mathcal{L} = E \mathcal{H}$ (corresponding to the fields inside a region $V$) is expressed by the Tomita operator as follows~\cite{Longo:2022,FiglioliniGuido:1994}:
\begin{equation}
\label{eq:OrthogonalProjector}
E = ( \1 + T ) ( \1 + \varDelta )^{-1} \eqend{,}
\end{equation}
any operator $A$ can be projected to the subspace via $A_V = E A E$, and $E$ has norm 1. The complex structure commutes with the modular operator $\varDelta$, but anticommutes with the Tomita operator $T$, which follows immediately from the definition of the Tomita operator~\eqref{eq:TomitaOperator} and its polar decomposition. We can therefore isolate an expression for the modular generator by computing
\begin{equations}[eq:ModularGenerator]
\1 - E + I E I &= ( \varDelta - \1 ) ( \varDelta + \1 )^{-1} = \tanh\left( \frac{1}{2} \ln \varDelta \right) \eqend{,} \label{eq:ModularGenerator.Derivation} \\
\ln \varDelta &= 2 \artanh\left( \1 - E + I E I \right) \label{eq:ModularGenerator.LogDef} \eqend{.}
\end{equations}
Note that the operator $I \ln \varDelta$ leaves the subspace $\mathcal{L}$ invariant~\cite[Prop.~2.1]{LongoMorsella:2023}, and thus its restriction to the subspace acts only on the subspace, 
\begin{equation}
\label{eq:ModularGeneratorSubspace}
( I \ln \varDelta )_V = 2 E I \artanh\left( \1 - E + I E I \right) E \eqend{.}
\end{equation}

We now show that this is actually equal to the formula~\eqref{eq:ModularHamiltonian.Region}.
\begin{lemma}
\label{lemma:modular}
We have
\begin{equation}
\label{eq:ModularGeneratorSubspaceRestricted}
( I \ln \varDelta )_V = - 2 \arctan(E I E) \eqend{.}
\end{equation}
\end{lemma}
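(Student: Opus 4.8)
The plan is to start from the closed expression~\eqref{eq:ModularGeneratorSubspace} for $(I \ln \varDelta)_V$ and manipulate the operator $\1 - E + IEI$ appearing inside the $\artanh$ so that everything is expressed in terms of the single operator $EIE$ acting on the subspace $\mathcal{L}$. The key observation is that, although $\1 - E + IEI$ is an operator on all of $\mathcal{H}$, it is sandwiched between projectors $E$ on both sides in~\eqref{eq:ModularGeneratorSubspace}, so only its behaviour relative to $\mathcal{L}$ matters. Writing $A = EIE$, I would first record the algebraic facts that follow from $I^\dagger = -I$, $I^2 = -\1$, $E^2 = E = E^\dagger$: the operator $A$ is anti-selfadjoint on $\mathcal{L}$ (since $A^\dagger = EI^\dagger E = -EIE = -A$), and more importantly I want a functional-calculus identity relating $E \artanh(\1 - E + IEI) E$ to $\arctan$ of $A$.

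The main technical step is the following. Using $I^2=-\1$ one computes $I E I = -I E I^{-1}\cdot(-\1)$... more usefully, decompose $I$ with respect to the splitting $\mathcal{H} = \mathcal{L} \oplus \mathcal{L}^\perp$, i.e. write $I = EIE + EIE^\perp + E^\perp I E + E^\perp I E^\perp$ with $E^\perp = \1 - E$. Then $IEI = (EIE + E^\perp I E + \dots) E (EIE + \dots)$, and multiplying out while keeping track of which terms survive after a final left- and right-multiplication by $E$ gives $E(IEI)E = EIEIE + EIE^\perp IE = EI(E + E^\perp)IE = EI^2E = -E$ — wait, that collapses too much, so instead one should \emph{not} immediately sandwich but rather use the relation $E - EIEIE = E IE^\perp IE = (EIE^\perp)(EIE^\perp)^\dagger$... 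The clean route is: on $\mathcal{L}$, set $A = EIE$ and show $\1_{\mathcal{L}} - E(IEI)E = \1_\mathcal L + A^\dagger A$-type relation is \emph{not} what one wants; rather one verifies directly that, restricted to $\mathcal{L}$, the operator $(\1 - E + IEI)|_{\mathcal L}$ equals (via $E(\1-E+IEI)E$ since we sit inside $\mathcal L$) and that $E(\1 - E)E = 0$, so $E(\1 - E + IEI)E = E(IEI)E$. Then I claim $E I E I E = A^2 + (\text{correction})$, and the correction assembles with the normalisation factor so that $2\artanh$ of it becomes $-2\arctan A$. Concretely, since $\artanh(x) = \mathi \arctan(\mathi x)$ and $\arctan$ is odd, the sign and the factor of $\mathi$ have to come from the relation $E(IEI)E = -(\text{something}) $ together with the anti-selfadjointness $A^\dagger = -A$, which makes $\mathi A$ selfadjoint and lets ordinary real functional calculus apply.

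In more detail, the step I expect to carry the weight is proving the operator identity
\begin{equation*}
E I \bigl( \1 - E + IEI \bigr) E = E I (IEI) E = (EIE)(EIE) = A^2
\end{equation*}
on $\mathcal{L}$ — but this cannot be quite right as stated because $EI E^\perp$ generally does not vanish; the correct statement will instead involve the full expression $E I \artanh(\1 - E + IEI) E$ and use that the argument, \emph{as an operator on $\mathcal H$}, has a block structure that reduces its odd functional calculus to $\mathcal L$. So the honest plan is: (i) show $\1 - E + IEI$ maps $\mathcal L \to \mathcal L$ and $\mathcal L^\perp \to \mathcal L^\perp$, i.e.\ commutes with $E$ — this is where $I^2 = -\1$ and $I^\dagger = -I$ get used, via $E(IEI) = (IEI)E \iff EIEI E = IEIE \cdot$ checked on each block; granting this, the functional calculus of $\artanh$ respects the decomposition and $E \artanh(\1-E+IEI) E = \artanh\bigl(E(\1-E+IEI)E\bigr) = \artanh\bigl(E(IEI)E\bigr)$; (ii) identify $E(IEI)E$ restricted to $\mathcal L$: again using $E^\perp = \1-E$ and $I^2 = -\1$, write $IEI = I(\1 - E^\perp)I = I^2 - I E^\perp I = -\1 - IE^\perp I$, hence $\1 - E + IEI = -E - IE^\perp I$, and $E(-E - IE^\perp I)E = -E - (EIE^\perp)(E^\perp IE) = -E - (EIE^\perp)(EIE^\perp)^\dagger$; then relate $EIE^\perp (EIE^\perp)^\dagger$ to $A^2$ via $EIE^\perp IE = EI I E - EIEIE = -E - A^2$, giving the pleasantly closed form $E(IEI)E|_{\mathcal L} = -E - (-E - A^2) = A^2$. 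So indeed the argument of $\artanh$ is $A^2 = (EIE)^2$, and finally
\begin{equation*}
( I \ln \varDelta )_V = 2 E I \, \artanh(A^2)\, E = 2 A\, \frac{\artanh(A^2)}{A} \Big|_{\text{as odd fn of }A} = -2\arctan(A),
\end{equation*}
where the last equality is the scalar identity $2x\,\dfrac{\artanh(x^2)}{x} = \dots$ — more precisely one uses $EI A^{2n} E = A^{2n+1}$ to resum $2 EI\artanh(A^2)E = 2\sum_{k\ge 0} \frac{A^{2k+1}}{2k+1} = -2\sum_{k\ge0}\frac{(-1)^k (\mathi A)^{2k+1}}{(2k+1)}\cdot\mathi^{-1}$, wait — cleaner: since $A^\dagger=-A$ write $A = \mathi B$ with $B=B^\dagger$, so $A^2 = -B^2$, $\artanh(A^2) = \artanh(-B^2) = -\artanh(B^2)$ and $2\mathi B(-\artanh(B^2)) = -2\mathi B\sum \frac{(B^2)^{2k+1}}{\dots}$; matching this against $-2\arctan(\mathi B) = -2\mathi\,\mathrm{artanh}(B)\cdot$... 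The remaining arithmetic is the scalar identity $2\,\mathrm{artanh}(x^2)/x \leftrightarrow \arctan$, and the main obstacle is purely getting the block-commutation in step~(i) and the signs in step~(ii) exactly right; once $E(IEI)E|_{\mathcal L} = (EIE)^2$ is established, the rest is routine power-series resummation.
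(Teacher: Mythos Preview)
Your step~(i) is false, and without it the rest of the argument does not go through. The operator $\1 - E + IEI$ does \emph{not} commute with $E$ in general. Writing $R \coloneqq -I + IE + EI$ one checks directly that $RE = ER = EIE$, so $R$ commutes with $E$; but $\1 - E + IEI = IR$, and hence
\[
[E,\ \1 - E + IEI] = [E, IR] = [E,I]\,R + I[E,R] = [E,I]\,R \eqend{.}
\]
Since $I$ does not preserve $\mathcal{L}$ (this is precisely the antilocality that makes $\mathcal{L}$ standard), $[E,I]\neq 0$, and $R\neq 0$ whenever $\varDelta\neq\1$. So the commutator is nonzero in every nontrivial case, and you cannot pull $E$ inside $\artanh$.

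Even if one grants step~(i), the resummation at the end fails. Your step~(ii) correctly gives $E(\1-E+IEI)E = A^2$ with $A=EIE$, and $EIA^{2n}E=A^{2n+1}$ is also correct. But then
\[
2EI\,\artanh(A^2)\,E = 2\sum_{k\ge 0}\frac{A^{4k+3}}{2k+1}\eqend{,}
\]
which contains only powers $\equiv 3 \pmod 4$, whereas $-2\arctan(A)=-2\sum_{k\ge 0}\tfrac{(-1)^k}{2k+1}A^{2k+1}$ contains all odd powers. The two series are manifestly different, so the identity you are aiming for cannot arise this way. The paper's route avoids both problems by pulling the $I$ out \emph{before} projecting: since $I$ commutes with $R$ and has spectrum $\{\pm\mathi\}$, the scalar identity $\artanh(\mathi z)=\mathi\arctan z$ lifts to $\artanh(IR)=I\arctan(R)$, giving $I\ln\varDelta=-2\arctan(R)$ on the full space. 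Only then does one project, and now it is $R$ (not $IR$) whose commutation with $E$ and whose identity $RE=EIE$ let the power series collapse to $-2\arctan(EIE)$ on a spectral core.
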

\begin{proof}
Consider the operator $R \coloneqq - I + I E + E I$, whose relation to the modular generator~\eqref{eq:ModularGenerator.Derivation} is given by
\begin{equation}
R = - I ( \varDelta - \1 ) ( \varDelta + \1 )^{-1} \eqend{.}
\end{equation}
Since $\varDelta$ is a positive operator and the spectrum of $I$ is given by $\{ \pm \mathi \}$, the spectrum of $R$ is contained in the interval $[-\mathi,\mathi]$. Moreover, since $\varDelta$ is invertible, $\pm \mathi$ are not eigenvalues of $R$. Therefore, the identity $\artanh(\mathi z) = \mathi \arctan(z)$ (valid for all $z \in \mathbb{C} \setminus \{ \pm \mathi \}$) holds on the spectrum of $R$, possibly apart from a subset of spectral measure zero. Using functional calculus, we may thus rewrite Eq.~\eqref{eq:ModularGenerator.LogDef} as
\begin{equation}
\ln \varDelta = 2 \artanh( I R ) = 2 I \arctan( R ) \eqend{,}
\end{equation}
from which it follows that
\begin{equation}
\label{eq:ModularGeneratorFromR}
I \ln \varDelta = - 2 \arctan( R ) \eqend{,} \quad ( I \ln \varDelta )_V = - 2 E \arctan( R ) E \eqend{.}
\end{equation}
The identity $R E = E I E = E R$ shows that $R$ and $E$ commute, and consequently $E$ also commutes with the spectral projections $P_r$ (with $r \in [0,1]$) of $R$ on the interval $[-\mathi r,\mathi r]$. Therefore, given a vector $h \in \mathcal{H}$ in the range of $P_r$ we have $E h = E P_r h = P_r E h$, and $E h$ also lies in the range of $P_r$. Taking $r < 1$, the series expansion of the arctangent applied to $h$ converges strongly:
\begin{equation}
\arctan( R ) h = \lim_{n \to \infty} \sum_{k=0}^n \frac{(-1)^k}{1+2k} R^{1+2k} h \eqend{,}
\end{equation}
with the $n$-independent bound $\norm{ \sum_{k=0}^n \frac{(-1)^k}{1+2k} R^{1+2k} h } \leq \artanh(r) \norm{ h } < \infty$, and the same holds for $E h$. In fact, we have
\begin{splitequation}
\norm{ \sum_{k=n+1}^\infty \frac{(-1)^k}{1+2k} R^{1+2k} h } &\leq \sum_{k=n+1}^\infty \frac{r^{1+2k}}{1+2k} \norm{ h } = \sum_{k=0}^\infty \frac{r^{2n+3+2k}}{2n+3+2k} \norm{ h } \\
&\leq \frac{r^{2n+3}}{2n+3} \sum_{k=0}^\infty r^{2k} \norm{h} = \frac{r^{2n+3}}{(2n+3) (1-r^2)} \norm{h} \eqend{,}
\end{splitequation}
which shows uniform convergence on $P_r \mathcal{H}$. Therefore, we may write
\begin{splitequation}
\arctan( R ) E h &= \lim_{n \to \infty} \sum_{k=0}^n \frac{(-1)^k}{1+2k} R^{1+2k} E h \\
&= \lim_{n \to \infty} \sum_{k=0}^n \frac{(-1)^k}{1+2k} ( E I E )^{1+2k} h = \arctan(E I E) h \eqend{,}
\end{splitequation}
where we used the previous identity $R E = E I E$ and that $E$ is a projection. Consequently, we have
\begin{equation}
( I \ln \varDelta )_V h = - 2 \arctan(E I E) h
\end{equation}
for all vectors $h \in \mathcal{H}$ in the range of $P_r$. Since the set of all such vectors (for all $r < 1$) is a core for $\arctan(R)$ and thus for $I \ln \varDelta$~\eqref{eq:ModularGeneratorFromR} as well as for the projections on the subspace $\mathcal{L}$, the equality~\eqref{eq:ModularGeneratorSubspaceRestricted} follows.
\end{proof}

We now consider $(1+1)$-dimensional Minkowski spacetime, and take the Hilbert space $\mathcal{H} = L^2_\mathbb{R}(\mathbb{R}) \oplus L^2_\mathbb{R}(\mathbb{R})$ of (real-valued) initial data on the Cauchy surface $t = 0$. The complex structure on $\mathcal{H}$ is determined by the Wightman function $\mathcal{G}$ for free Majorana fermions,
\begin{equation}
\label{eq:ComplexStructure}
I = - \mathi ( \1 - 2 \mathcal{G} )
\end{equation}
and is a real antilocal operator, which is shown in detail in appendix~\ref{sec:AppendixFreeFermions}. The subspace $\mathcal{L}$ corresponds to test functions that are supported inside the region $V$, and $E$ is the operator on $\mathcal{H}$ that multiplies initial data with the characteristic function of $V$.\footnote{To show that $\mathcal{L}$ is standard, one proceeds in analogy to the case of free bosons, see Refs.~\cite{FiglioliniGuido:1989,SegalGoodman:1965,Masuda:1972,Murata:1973}.} The modular generator on the subspace~\eqref{eq:ModularGeneratorSubspaceRestricted} is thus related to the modular Hamiltonian as
\begin{equation}
( I \ln \varDelta )_V = - 2 \arctan(E I E) = 2 \arctan\Bigl[ \mathi \bigl( \1_V - 2 \mathcal{G}_V \bigr) \Bigr] = - \mathi H_V \eqend{,}
\end{equation}
where we used formula~\eqref{eq:ModularHamiltonian.Region} in the last step. Note that while formula~\eqref{eq:ModularHamiltonian.Region} can also be employed for complex-valued initial data (i.e., Dirac fermions), we only consider real-valued initial data (i.e., Majorana fermions). Our derivation shows that the restriction to real-valued data commutes with the convolution by $- \mathi H_V$, such that one can consistently restrict $- \mathi H_V$ to Majorana fermions.

We therefore can use the formula~\eqref{eq:ModularHamiltonian.Region} to study the modular generator $I \ln \varDelta$ on the subspace corresponding to fields restricted to a given region $V$. As explained in the introduction, in the following calculations we consider an interval $V = [-\ell,\ell]$, whose causal closure is a double cone, see figure~\ref{fig:Diamond}.

\section{Free fermions on \texorpdfstring{$(1+1)$}{(1+1)}-dimensional Minkowski spacetime}
\label{sec:FreeFermions}

For the formulation of the fermion field, we use the conventions of~\cite{FreedmanVanProeyen:2012}. A Dirac fermion is a complex two-component spinor $\psi = (\psi_1,\psi_2)$, and we choose the $\gamma$ matrices in the form
\begin{equation}
\label{eq:DiracMatrices}
\gamma^0 = \begin{pmatrix} 0 & 1 \\ - 1 & 0 \end{pmatrix} \eqend{,} \quad \gamma^1 = \begin{pmatrix} 0 & 1 \\ 1 & 0 \end{pmatrix} \eqend{,} \quad \gamma_* = \gamma^0 \gamma^1 = \begin{pmatrix} 1 & 0 \\ 0 & -1 \end{pmatrix} \eqend{.}
\end{equation}
This choice has the advantage~\cite{FreedmanVanProeyen:2012} that Majorana fermions are simply real, $\psi_i^* = \psi_i$.

The computation of the Wightman two-point function~\eqref{eq:TwoPointFunction.Wightman} in the Minkowski vacuum is standard, and we have included it for completeness in appendix~\ref{sec:AppendixFreeFermions}. The result reads
\begin{equations}[eq:TwoPointFunction.Components]
\mathcal{G}_{11}(x,y) &= \frac{1}{2 \pi \mathi} \left( \lim_{\epsilon \to 0^+} \frac{1}{x-y - \mathi \epsilon} + \frac{F_1(\abs{x-y}) - 1}{x-y} \right) \eqend{,} \\
\mathcal{G}_{12}(x,y) &= \frac{1}{2 \pi \mathi} F_0(\abs{x-y}) \eqend{,} \\
\mathcal{G}_{21}(x,y) &= - \frac{1}{2 \pi \mathi} F_0(\abs{x-y}) \eqend{,} \\
\mathcal{G}_{22}(x,y) &= - \frac{1}{2 \pi \mathi} \left( \lim_{\epsilon \to 0^+} \frac{1}{x-y + \mathi \epsilon} + \frac{F_1(\abs{x-y}) - 1}{x-y} \right) \eqend{,}
\end{equations}
where we defined the functions
\begin{equations}[eq:TwoPointFunction.MassiveTerms]
F_0(z) &\coloneqq m \bessel{K}0{m z} = - m \ln\left( \frac{m z}{2} \mathe^{\gamma_\mathrm{E}} \right) + \bigo{ m^3 \ln m } \eqend{,} \\
F_1(z) &\coloneqq m z \bessel{K}1{m z} = 1 + \frac{1}{2} m^2 z^2 \left[ \ln\left( \frac{m z}{2} \mathe^{\gamma_\mathrm{E}} \right) - \frac{1}{2} \right] + \bigo{ m^4 \ln m } \eqend{,}
\end{equations}
in terms of the modified Bessel functions (of the second kind) $\bessel{K}i{x}$ and their expansion with the Euler--Mascheroni constant $\gamma_\mathrm{E}$. 
Restricting $x$ and $y$ to the interval $[-\ell,\ell]$ yields the integral kernel of the operator $\mathcal{G}_V$, which is a convolution operator acting on functions restricted to the interval.

\subsection{Spectral decomposition on the interval}
\label{sec:SpectralDecomposition}

We explain the general theory of spectral decomposition, which was also used in \cite{CasiniHuerta:2009} to derive the decomposition for a multi-component region and massless fermions. As mentioned in the introduction, the spectrum of $\mathcal{G}_V$ is contained in $[0,1]$~\cite{Araki:1970}, and we assume that it is purely continuous. That is, we assume that we have generalised eigenvectors $\varPsi^{(k)}_a(s,x)$ and real eigenvalues $\lambda^{(k)}(s)$ such that the eigenequation
\begin{equation}
\label{eq:TwoPointFunction.EigenvectorEquation}
\sum_b \int_{-\ell}^\ell \mathcal{G}_{ab}(x,y) \varPsi^{(k)}_b(s,y) \total y = \lambda^{(k)}(s) \varPsi^{(k)}_a(s,x)
\end{equation}
holds. It turns out that it is useful to parametrise the eigenvalues and corresponding (generalised) eigenvectors by $s \in \mathbb{R}$, instead of taking the eigenvalue itself. The eigenvectors should be orthogonal, normalised (to the $\delta$ distribution since they are generalised eigenvectors), and complete:
\begin{equations}[eq:Eigenvector.Properties]
\sum_a \int_{-\ell}^\ell \varPsi^{(k)*}_a(s,x) \varPsi^{(l)}_a(t,x) \total x &= \delta_{kl} \delta(s-t) \eqend{,} \label{eq:Eigenvector.Properties.on} \\
\sum_k \int_{-\infty}^\infty \varPsi^{(k)*}_a(s,x) \varPsi^{(k)}_b(s,y) \total s &= \delta_{ab} \delta(x-y) \eqend{.} \label{eq:Eigenvector.Properties.complete}
\end{equations}
Then we obtain the spectral decomposition of the operator $\mathcal{G}_V$,
\begin{equation}
\label{eq:TwoPointFunction.SpectralDecomposition}
\mathcal{G}_{ab}(x,y) = \sum_k \int_{-\infty}^\infty \lambda^{(k)}(s) \varPsi^{(k)}_a(s,x) \varPsi^{(k)*}_b(s,y) \total s \eqend{,}
\end{equation}
and the resolvent 
\begin{equation}
\mathcal{R}(\mu) \coloneqq \left( \mathcal{G}_V - \mu \1 \right)^{-1} \eqend{,}
\end{equation}
exists for any complex value $\mu$ that is not in the spectrum of $\mathcal{G}_V$. By spectral calculus, the integral kernel of the resolvent reads 
\begin{equation}
\label{eq:Resolvent.Massless}
\mathcal{R}(\mu)_{ab}(x,y) = \sum_k \int_{-\infty}^\infty \frac{1}{\lambda^{(k)}(s) - \mu} \varPsi^{(k)}_a(s,x) \varPsi^{(k)*}_b(s,y) \total s \eqend{,}
\end{equation}
and the kernel of the modular Hamiltonian~\eqref{eq:ModularHamiltonian.Region} is given by
\begin{equation}
\label{eq:ModularHamiltonian.Massless}
H_{ab}(x,y) = - \sum_k \int_{-\infty}^\infty \ln\left( \frac{1}{\lambda^{(k)}(s)} - 1 \right) \varPsi^{(k)}_a(s,x) \varPsi^{(k)*}_b(s,y) \total s \eqend{.}
\end{equation}
This expression can be explicitly evaluated for the case of a massless fermion, which we review first. Afterwards, we use the resolvent to compute massive corrections in section~\ref{sec:MassiveContributions.FirstOrder}.

\subsection{The massless modular Hamiltonian}
\label{sec:ModularHamiltonian.Massless}

As a concrete example, let us consider the massless case $m = 0$, where the integral kernels of the components of the two-point function~\eqref{eq:TwoPointFunction.Components} read
\begin{equations}[eq:TwoPointFunction.Components.Massless]
\mathcal{G}_{11}(x,y) &= \lim_{\epsilon \to 0^+} \frac{1}{2 \pi \mathi} \frac{1}{x-y - \mathi \epsilon} \eqend{,} \\
\mathcal{G}_{12}(x,y) &= 0 \eqend{,} \\
\mathcal{G}_{21}(x,y) &= 0 \eqend{,} \\
\mathcal{G}_{22}(x,y) &= - \lim_{\epsilon \to 0^+} \frac{1}{2 \pi \mathi} \frac{1}{x-y + \mathi \epsilon} \eqend{.}
\end{equations}
We recall that these kernels define a convolution operator on the subspace, where $x,y \in [-\ell,\ell]$, and we always assume this restriction on $x$ and $y$ in the following. The generalised eigenvectors are given by~\cite{CasiniHuerta:2009}
\begin{splitequation}
\label{eq:MasslessEigenvector}
\varPsi^{(k)}_a(s,x) &= \delta^k_a \varPsi(s,x) \eqend{,} \\
\varPsi(s,x) &\coloneqq \sqrt{ \frac{\ell}{\pi} } (\ell+x)^{-\frac{1}{2}-\mathi s} (\ell-x)^{-\frac{1}{2}+\mathi s} = \sqrt{ \frac{\ell}{\pi} } \frac{1}{\sqrt{\ell^2 - x^2}} \left( \frac{\ell - x}{\ell + x} \right)^{\mathi s} \eqend{,}
\end{splitequation}
and the corresponding eigenvalues read
\begin{equation}
\label{eq:MasslessEigenvalue.Properties}
\lambda^{(1)}(s) = \frac{1}{1 + \mathe^{- 2 \pi s}} \eqend{,} \quad \lambda^{(2)}(s) = \frac{1}{1 + \mathe^{2 \pi s}} = \lambda^{(1)}(-s) \eqend{.}
\end{equation}
The function $\varPsi$ satisfies the useful properties
\begin{equation}
\label{eq:MasslessEigenvector.Properties}
\varPsi^*(s,x) = \varPsi(-s,x) = \varPsi(s,-x) \eqend{.}
\end{equation}
In appendix~\ref{appx:MasslessDecomposition}, we verify that these are indeed the correct generalised eigenvectors and eigenvalues, and that they form an orthogonal and complete eigenbasis, i.e., they satisfy the conditions~\eqref{eq:Eigenvector.Properties}. The distributional character of the orthonormality condition~\eqref{eq:Eigenvector.Properties.on} shows clearly that the generalised eigenvectors are not normalisable as elements of the Hilbert space. However, one can define them rigorously in the framework of rigged Hilbert spaces or Gel'fand triples~\cite{EncyclopediaMathRHS}.

Since $s \in \mathbb{R} = (-\infty,\infty)$, we obtain $\lambda^{(k)}(s) \in (0,1)$, verifying the condition on the spectrum of $\mathcal{G}_V$ and excluding the values $0$ and $1$. Note that the above basis diagonalizes the operator $\mathcal{G}_V$, as can be either inferred from the spectral decomposition of the propagator (see below), or from the results of Koppelman and Pincus~\cite{KoppelmanPincus:1959}, who show that the finite Hilbert transform (which $\mathcal{G}_V(x,y)$ essentially is) is unitarily equivalent to a multiplication operator (with some changes in notation and rescalings of variables).

The spectral decomposition of the propagator itself reads 
\begin{splitequation}
\mathcal{G}_{ab}(x,y) &= \sum_k \int_{-\infty}^\infty \lambda^{(k)}(s) \varPsi^{(k)}_a(s,x) \varPsi^{(k)*}_b(s,y) \total s \\
&= \delta_{ab} \frac{\ell}{\pi} \int_{-\infty}^\infty \frac{\lambda^{(a)}(s)}{\sqrt{( \ell^2 - x^2 ) ( \ell^2 - y^2 )}} \left( \frac{\ell - x}{\ell + x} \frac{\ell + y}{\ell - y} \right)^{\mathi s} \total s \eqend{,}
\end{splitequation}
and thus,
\begin{equation}
\label{eq:TwoPointFunction.Component11.SpectralDecomposition}
\mathcal{G}_{11}(x,y) = \frac{\ell}{\pi} \int_{-\infty}^\infty \frac{1}{1 + \mathe^{- 2 \pi s}} \frac{1}{\sqrt{( \ell^2 - x^2 ) ( \ell^2 - y^2 )}} \left( \frac{\ell - x}{\ell + x} \frac{\ell + y}{\ell - y} \right)^{\mathi s} \total s \eqend{,}
\end{equation}
$\mathcal{G}_{12} = \mathcal{G}_{21} = 0$ and $\mathcal{G}_{22}(x,y) = \mathcal{G}_{11}^*(x,y)$. Since we have generalised eigenvectors, the integral converges only in a distributional sense and is computed in appendix~\ref{appx:MasslessDecomposition}. In that appendix, we also show that the modular Hamiltonian, given by the spectral decomposition~\eqref{eq:ModularHamiltonian.Massless}, reads
\begin{splitequation}
\label{eq:ModularHamiltonian.Interval.Massless}
H_{ab}(x,y) &= 2 \pi \left( \delta^1_a \delta^1_b - \delta^2_a \delta^2_b \right) \int_{-\infty}^\infty s \varPsi(s,x) \varPsi(s,-y) \total s \\
&= 2 \ell \left( \delta^1_a \delta^1_b - \delta^2_a \delta^2_b \right) \int_{-\infty}^\infty \frac{s}{\sqrt{( \ell^2 - x^2 ) ( \ell^2 - y^2 )}} \left( \frac{\ell - x}{\ell + x} \frac{\ell + y}{\ell - y} \right)^{\mathi s} \total s \\
&= \mathi \pi \gamma_{*ab} \frac{\ell^2 - x y}{\ell} \delta'(x-y) \eqend{.}
\end{splitequation}
This coincides with the known results for a single interval~\cite{CasiniHuerta:2009}.

Similarly, we could also compute an explicit expression for the resolvent~\eqref{eq:Resolvent.Massless}. However, in the following calculations we only need its definition to compute massive corrections to the modular Hamiltonian, and thus refrain from giving the corresponding result.

\section{The massive modular Hamiltonian}
\label{sec:MassiveModularHamiltonian}

In the previous section, we computed the integral kernel of the massless modular Hamiltonian for the interval $[-\ell,\ell]$ and obtained the known result~\eqref{eq:ModularHamiltonian.Interval.Massless} starting with the massless part of the two-point function~\eqref{eq:TwoPointFunction.Components.Massless}. Now we use the spectral decomposition of the massless case to treat the massive modular Hamiltonian perturbatively for a small mass $m$. The perturbation operator, which we call $\mathcal{K}$, is the main tool in the computation of the first-order mass corrections.

\subsection{The perturbation operator}
\label{sec:PerturbationOperator}

Equation~\eqref{eq:TwoPointFunction.Components} shows that the massive contributions to the full two-point function are determined by modified Bessel functions, which have a series expansion for small values of the mass parameter $m$ given by~\eqref{eq:TwoPointFunction.MassiveTerms}. It is these terms that determine the integral kernels of the perturbation operator $\mathcal{K}$ in the lowest order $m \ln m$ of the massive correction:
\begin{equations}[eq:PerturbationOperator]
\mathcal{K}_{11}(x,y) &= \bigo{ m^2 \ln m } \eqend{,} \\
\mathcal{K}_{12}(x,y) &= - \frac{m}{2 \pi \mathi} \ln\left( \frac{m \abs{x-y}}{2} \mathe^{\gamma_\mathrm{E}} \right) + \bigo{ m^3 \ln m } \eqend{,} \\
\mathcal{K}_{21}(x,y) &= \frac{m}{2 \pi \mathi} \ln\left( \frac{m \abs{x-y}}{2} \mathe^{\gamma_\mathrm{E}} \right) + \bigo{ m^3 \ln m } \eqend{,} \\
\mathcal{K}_{22}(x,y) &= \bigo{ m^2 \ln m } \eqend{.}
\end{equations}
Since we are only interested in the first-order perturbation, we drop all terms of order $m^2 \ln m$ and higher. Calculations of higher-order corrections would follow essentially the same steps.

To compute the correction to the modular Hamiltonian via the resolvent, we first express the perturbation kernel $\mathcal{K}$ in the massless (generalised) eigenbasis given by the $\varPsi^{(k)}(s)$. The corresponding matrix elements are 
\begin{splitequation}
\label{eq:PerturbationOperator.MatrixElement}
K^{(kl)}(s,t) &\coloneqq \left( \varPsi^{(k)}(s), \mathcal{K} \varPsi^{(l)}(t) \right) \\
&= \sum_{a,b} \iint_{-\ell}^\ell \varPsi^{(k)*}_a(s,x) \mathcal{K}_{ab}(x,y) \varPsi^{(l)}_b(t,y) \total x \total y \\
&= m \ell \epsilon_{kl} K(s,t) \eqend{,} \\
K(s,t) &\coloneqq \frac{\mathi}{2 \pi \ell} \iint_{-\ell}^\ell \varPsi^*(s,x) \ln\left( \frac{m \abs{x-y}}{2} \mathe^{\gamma_\mathrm{E}} \right) \varPsi(t,y) \total x \total y \eqend{,}
\end{splitequation}
where $\epsilon_{12} = 1$, $\epsilon_{21} = -1$ and $\epsilon_{11} = \epsilon_{22} = 0$. The detailed calculations of the double integral $K(s,t)$ are deferred to appendix~\ref{appx:FirstOrderMassiveCorrection}. The solution reads
\begin{splitequation}
\label{eq:PerturbationOperator.MatrixElement.Solution}
K(s,t) &= \frac{\mathi \ln\left( m \ell \mathe^{2\gamma_\mathrm{E}} \right)}{2 \cosh(\pi s) \cosh(\pi t)} \\
&\quad- \frac{\mathi \tanh(\pi t)}{4 \sinh[ \pi (s-t) ]} \left[ \psi\left( \frac{1}{2} + \mathi s \right) + \psi\left( \frac{1}{2} - \mathi s \right) \right] \\
&\quad+ \frac{\mathi \tanh(\pi s)}{4 \sinh[ \pi (s-t) ]} \left[ \psi\left( \frac{1}{2} + \mathi t \right) + \psi\left( \frac{1}{2} - \mathi t \right) \right] \eqend{,}
\end{splitequation}
where $\psi(x) = \partial_x \ln \Gamma(x)$ is the digamma function, and we see that $K$ is a bounded (and actually fast decaying), smooth kernel with the properties 
\begin{equation}
\label{eq:PerturbationOperator.MatrixElement.Solution.Properties}
K(s,t) = K(t,s) = K(-s,-t) \eqend{.}
\end{equation}
It is easy to see that $K$ is smooth for all $s \neq t$, while in the coincidence case $t \to s$ we use l'H{\^o}pital's rule and obtain
\begin{splitequation}
K(s,s) &= \frac{\mathi}{4 \cosh^2(\pi s)} \left[ \psi\left( \frac{1}{2} + \mathi s \right) + \psi\left( \frac{1}{2} - \mathi s \right) + 2 \ln\left( m \ell \mathe^{2\gamma_\mathrm{E}} \right) \right] \\
&\quad- \frac{1}{4 \pi} \tanh(\pi s) \left[ \psi'\left( \frac{1}{2} - \mathi s \right) - \psi'\left( \frac{1}{2} + \mathi s \right) \right] \eqend{.}
\end{splitequation}

Given this solution for the matrix elements of the perturbation operator, we recover the operator kernel as the inverse relation to~\eqref{eq:PerturbationOperator.MatrixElement}. This means that we have the expansion 
\begin{equation}
\mathcal{K}_{ab}(x,y) = \sum_{k,l} \iint \varPsi^{(k)}_a(s,x) K^{(kl)}(s,t) \varPsi^{(l)*}_b(t,y) \total s \total t \eqend{,}
\end{equation}
which we use for the computation of the first-order perturbation to the massive Hamiltonian in the following.

\subsection{Massive first-order contributions}
\label{sec:MassiveContributions.FirstOrder}

The expression~\eqref{eq:ModularHamiltonian.Region} for the modular Hamiltonian is a special case of an integral identity. Consider an operator $A$ with spectrum contained in $[0,1]$, where $0$ and $1$ are not eigenvalues. Via spectral calculus, it fulfills the integral relation
\begin{equation}
\ln\left( A^{-1} - \1 \right) = \int_0^\infty \left[ ( A + \mu \1 )^{-1} - ( \1 - A + \mu \1 )^{-1} \right] \total \mu \eqend{.}
\end{equation}
When perturbing the operator $A \to A + \delta A$, we obtain 
\begin{splitequation}
\label{eq:PerturbationLogA}
\delta \ln\left( A^{-1} - \1 \right) &= - \int_0^\infty \Bigl[ ( A + \mu \1 )^{-1} \delta A ( A + \mu \1 )^{-1} \\
&\qquad\qquad+ ( \1 - A + \mu \1 )^{-1} \delta A ( \1 - A + \mu \1 )^{-1} \Bigr] \total \mu \eqend{.}
\end{splitequation}
To apply this to the modular Hamiltonian, we set $A = \mathcal{G}_V$ and note that the inverse operators of the integrand on the right-hand side correspond to the resolvent of the massless theory $\mathcal{R}(\mu) = \left( \mathcal{G}_V - \mu \1 \right)^{-1}$ for different values $\mu$. It follows that the first-order perturbation of the modular Hamiltonian reads 
\begin{equation}
H^{(1)} = \int_0^\infty \Bigl[ \mathcal{R}(-\mu) \, \mathcal{K} \, \mathcal{R}(-\mu) + \mathcal{R}(1+\mu) \, \mathcal{K} \, \mathcal{R}(1+\mu) \Bigr] \total \mu \eqend{.}
\end{equation}

The resolvent of the massless theory has the spectral expansion~\eqref{eq:Resolvent.Massless}. For the computation of the convolution $\mathcal{R}(\mu) \, \mathcal{K} \, \mathcal{R}(\mu)$, we use the orthonormality and completeness condition~\eqref{eq:Eigenvector.Properties} of the massless eigenbasis. This results in
\begin{splitequation}
\bigl[ \mathcal{R}(\mu) \, \mathcal{K} \, \mathcal{R}(\mu) \bigr]_{ab}(x,y)
&= \sum_{c,d} \iint_{-\ell}^\ell \mathcal{R}(\mu)_{ac}(x,u) \, \mathcal{K}_{cd}(u,v) \, \mathcal{R}(\mu)_{db}(v,y) \total u \total v \\
&= \sum_{k,l} \iint \frac{\varPsi^{(k)}_a(s,x)}{\lambda^{(k)}(s) - \mu} K^{(kl)}(s,t) \frac{\varPsi^{(l)*}_b(t,y)}{\lambda^{(l)}(t) - \mu} \total s \total t \eqend{.}
\end{splitequation}
We can now insert this formula into the first-order perturbation of the modular Hamiltonian and integrate over $\mu$, which results in
\begin{splitequation}
H^{(1)}_{ab}(x,y) &= \sum_{k,l} \iint \varPsi^{(k)}_a(s,x) K^{(kl)}(s,t) \varPsi^{(l)*}_b(t,y) \\
&\quad\times \int_0^\infty \left( \frac{1}{\lambda^{(k)}(s) + \mu} \frac{1}{\lambda^{(l)}(t) + \mu}  + \frac{1}{\lambda^{(k)}(s) - \mu-1} \frac{1}{\lambda^{(l)}(t) - \mu-1} \right) \total \mu \total s \total t \\
&= \iint \frac{\varPsi(s,x) K^{(ab)}(s,t) \varPsi(-t,y)}{\lambda^{(a)}(s) - \lambda^{(b)}(t)} \ln\left( \frac{\lambda^{(a)}(s)}{\lambda^{(b)}(t)} \frac{1 - \lambda^{(b)}(t)}{1 - \lambda^{(a)}(s)} \right) \total s \total t \eqend{.}
\end{splitequation}
We use the symmetry properties~\eqref{eq:MasslessEigenvalue.Properties} and~\eqref{eq:PerturbationOperator.MatrixElement.Solution.Properties} for the eigenvalues and the kernel $K(s,t)$, respectively, to simplify this expression and obtain
\begin{equations}
H^{(1)}_{11}(x,y) &= H^{(1)}_{22}(x,y) = 0 \eqend{,} \\
H^{(1)}_{12}(x,y) &= 4 \pi m \ell \iint \varPsi(s,x) K(s,-t) \varPsi(t,y) \frac{(s-t) \cosh(\pi s) \cosh(\pi t)}{\sinh[ \pi (s-t) ]} \total s \total t \eqend{,} \\
H^{(1)}_{21}(x,y) &= \left[ H^{(1)}_{12}(y,x) \right]^* \eqend{.}
\end{equations}
With the previously computed matrix element~\eqref{eq:PerturbationOperator.MatrixElement.Solution}, the remaining integral is 
\begin{splitequation}
\label{eq:ModularHamiltonian.Interval.Massive.Integral}
H^{(1)}_{12}(x,y) &= \frac{\mathi m \ell^2}{\sqrt{ (\ell^2-x^2) (\ell^2-y^2) }} \iint \left( \frac{\ell-x}{\ell+x} \right)^{\mathi s} \left( \frac{\ell-y}{\ell+y} \right)^{\mathi t} \frac{s-t}{\sinh[ \pi (s-t) ]} \\
&\qquad\times \Bigg( 2 \ln\left( m \ell \mathe^{2\gamma_\mathrm{E}} \right) + \frac{\cosh(\pi s) \sinh(\pi t)}{\sinh[ \pi (s+t) ]} \left[ \psi\left( \frac{1}{2} + \mathi s \right) + \psi\left( \frac{1}{2} - \mathi s \right) \right] \\
&\qquad\qquad+ \frac{\sinh(\pi s) \cosh(\pi t)}{\sinh[ \pi (s+t) ]} \left[ \psi\left( \frac{1}{2} + \mathi t \right) + \psi\left( \frac{1}{2} - \mathi t \right) \right] \Bigg) \total s \total t \eqend{.}
\end{splitequation}
The computation of this double integral is somewhat involved, and we perform it in appendix~\ref{appx:FirstOrderMassiveCorrection}.

To write down the solution, let us define the distribution
\begin{equation}
\label{eq:ModifiedPrincipalValueDistribution}
\pvalue_\mu \frac{1}{\abs{x}} \coloneqq \lim_{\epsilon \to 0^+} \left[ \frac{\Theta(x-\epsilon) - \Theta(-x-\epsilon)}{x} + 2 \ln\left( \mu \epsilon \, \mathe^{-\gamma_\mathrm{E}} \right) \delta(x) \right] \eqend{,}
\end{equation}
which depends on a parameter $\mu$ with dimensions of mass to make the argument of the logarithm dimensionless. For convenience, one could set $\mu = \ell^{-1}$. In appendix~\ref{appx:FinitePartDistribution}, we also show that $\pvalue_\mu \frac{1}{\abs{x}}$ is a well-defined distribution, i.e., that the limit $\epsilon \to 0^+$ is finite after smearing with a test function. With this distribution, the solution of~\eqref{eq:ModularHamiltonian.Interval.Massive.Integral} reads
\begin{splitequation}
\label{eq:FirstOrderMassCorrection.Result}
H^{(1)}_{12}(x,y) &= 2 \pi \mathi m \ell \Biggl[ \ln\left( m \ell \frac{\ell^2 - x^2}{2 \ell} \mu \right) \frac{\ell^2 - x^2}{2 \ell^2} \delta(x+y) + \frac{1}{8 \ell^2} \abs{x-y} \\
&\qquad\qquad- \frac{\ell^2 - x^2}{2 \ell^2} \delta(x-y) - \frac{2 \ell^2 - x^2 - y^2}{8 \ell^2} \pvalue_\mu \frac{1}{\abs{x+y}} \Biggr] \eqend{.}
\end{splitequation}

Combing the massless result~\eqref{eq:ModularHamiltonian.Interval.Massless} and the first-order massive corrections~\eqref{eq:FirstOrderMassCorrection.Result}, we obtain the integral kernels
\begin{equations}[eq:ModularHamiltonian.Result]
H_{11}(x,y) &= \mathi \pi \frac{\ell^2 - x y}{\ell} \delta'(x-y) + \bigo{ m^2 \ln m } \eqend{,} \\
H_{12}(x,y) &= H^{(1)}_{12}(x,y) + \bigo{ m^2 \ln m } \eqend{,} \\
H_{21}(x,y) &= - H^{(1)}_{12}(x,y) + \bigo{ m^2 \ln m } \eqend{,} \\
H_{22}(x,y) &= - \mathi \pi \frac{\ell^2 - x y}{\ell} \delta'(x-y) + \bigo{ m^2 \ln m }
\end{equations}
of the full modular Hamiltonian for fermions in the interval $[-\ell,\ell]$ and to first order in the mass $m$. In the discussion below, we denote the massless contribution by $H^{(0)}_{11}(x,y)$.

\section{Discussion}
\label{sec:Discussion}

We have computed the full analytic form of the first-order massive corrections to the massless modular Hamiltonian, for fermions in $(1+1)$-dimensional Minkowski spacetime and the Minkowski vacuum state restricted to an interval. To understand these corrections, it is useful to have a visual representation of the integral kernels. Since they are distributions and not just functions, we have to smear them with test functions, which we take to be normalised Gaussians with centre $x_i$ and variance $\sigma^2$:
\begin{equation}
\label{eq:GaussianTestFunction}
g_i(x) \coloneqq \frac{1}{\sqrt[4]{\pi \sigma^2}} \exp\left( - \frac{( x - x_i )^2}{2 \sigma^2} \right) \eqend{,} \quad \int \left[ g_i(x) \right]^2 \total x = 1 \eqend{.}
\end{equation}
For the centres $x_i$ we choose equidistant points in the interval $[-\ell,\ell]$, and to obtain a high resolution we set the width to $\sigma = \ell/32$.

The massless contribution to the $11$ component of the modular Hamiltonian $\mathi H^{(0)}_{11}(x,y)$ as given in~\eqref{eq:ModularHamiltonian.Result} smeared against two sets of Gaussians in the $x$ and $y$ direction is shown in figure~\ref{fig:ModularHamiltonian.Massless}. 
\begin{figure}
    \includegraphics{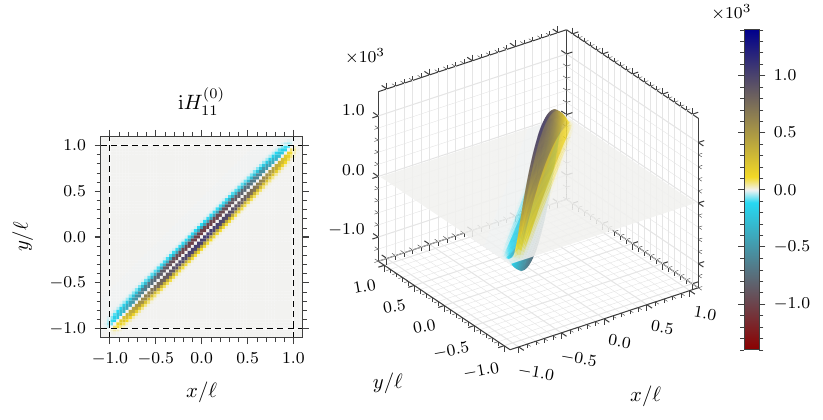}
    \caption{Massless integral kernel $\mathi H^{(0)}_{11}(x,y)$ smeared against Gaussian test functions $g_i(x) g_j(y)$ as given in~\eqref{eq:GaussianTestFunction}, shown once as a matrix plot (left) and a perspective on a surface plot (right) using the same colour scale. Note that this scale is many orders of magnitude larger than the first-order corrections shown in figure~\ref{fig:ModularHamiltonian.MassiveFirstOrder}.}
    \label{fig:ModularHamiltonian.Massless}
\end{figure}
For the massive corrections $\mathi H^{(1)}_{12}(x,y)$, we choose a small mass $m = 0.02/\ell$ (for which a first-order approximation should be valid) and smear against the same Gaussians, see figure~\ref{fig:ModularHamiltonian.MassiveFirstOrder}.
\begin{figure}[ht]
    \begin{subfigure}[b]{0.45\textwidth}
         \centering
         \includegraphics{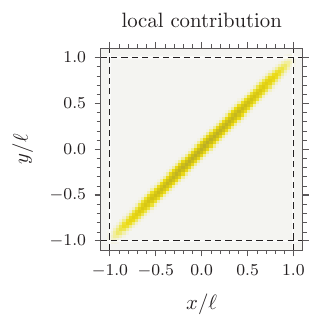}
         \caption{Terms proportional to $\delta(x-y)$.}
         \label{fig:ModularHamiltonian.MassiveFirstOrder.Local}
    \end{subfigure}
    \hspace{2em}
    \begin{subfigure}[b]{0.45\textwidth}
        \centering
        \includegraphics{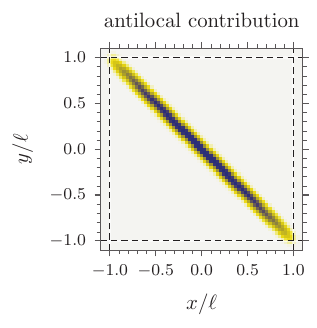}
        \caption{Terms proportional to $\delta(x+y)$.}
        \label{fig:ModularHamiltonian.MassiveFirstOrder.Antilocal}
    \end{subfigure}
    \\
    \begin{subfigure}[b]{0.45\textwidth}
        \centering
        \includegraphics{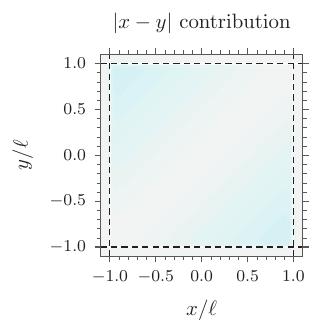}
        \caption{Terms proportional to $\abs{x-y}$.}
        \label{fig:ModularHamiltonian.MassiveFirstOrder.Abs}
    \end{subfigure}
    \hspace{2em}
    \begin{subfigure}[b]{0.45\textwidth}
        \centering
        \includegraphics{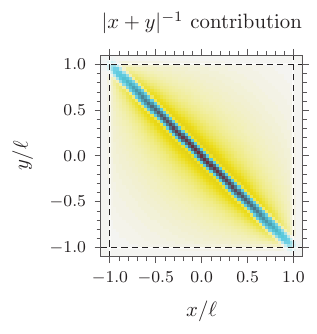}
        \caption{Terms proportional to $\frac{1}{\abs{x+y}}$.}
        \label{fig:ModularHamiltonian.MassiveFirstOrder.AbsInv}
    \end{subfigure}
    \\
    \begin{subfigure}[b]{0.95\textwidth}
        \centering
        \includegraphics{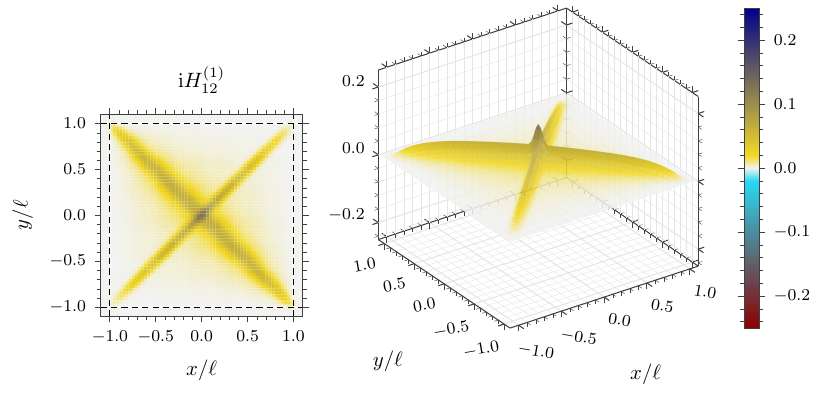}
        \caption{Sum of all four contributions figure~\ref{fig:ModularHamiltonian.MassiveFirstOrder.Local}, figure~\ref{fig:ModularHamiltonian.MassiveFirstOrder.Antilocal}, figure~\ref{fig:ModularHamiltonian.MassiveFirstOrder.Abs} and figure~\ref{fig:ModularHamiltonian.MassiveFirstOrder.AbsInv}.}
        \label{fig:ModularHamiltonian.MassiveFirstOrder.All}
    \end{subfigure}
    \caption{Corrections to the modular Hamiltonian in first order for a mass $m = 0.02/\ell$, smeared against Gaussians equidistantly shifted in $x$- and $y$-directions.}
    \label{fig:ModularHamiltonian.MassiveFirstOrder}
\end{figure}
For a more detailed inspection, we break the result~\eqref{eq:FirstOrderMassCorrection.Result} down into its four contributions. The first contribution is the local correction of the diagonal part that is proportional to $\delta(x-y)$, see figure~\ref{fig:ModularHamiltonian.MassiveFirstOrder.Local}. An even stronger contribution comes from the antilocal terms proportional to $\delta(x+y)$, see figure~\ref{fig:ModularHamiltonian.MassiveFirstOrder.Antilocal}. The absolute function $\abs{x-y}$, see figure~\ref{fig:ModularHamiltonian.MassiveFirstOrder.Abs}, contributes only mildly, while the regularised absolute inverse term $\pvalue_\mu \frac{1}{\abs{x+y}}$ (shown in figure~\ref{fig:ModularHamiltonian.MassiveFirstOrder.AbsInv}) partially compensates the antilocal term. In summary, figure~\ref{fig:ModularHamiltonian.MassiveFirstOrder.All} shows the full integral kernel integrated against the Gaussian functions as a matrix plot and also as a surface plot. All plots for the massive corrections use the same colour scale, while the massless contribution (see figure~\ref{fig:ModularHamiltonian.Massless}) is many orders of magnitude stronger, essentially because its distributional kernel is proportional to the derivative $\delta'(x-y)$. Incidentally, this shows that including the first-order corrections should be a good approximation to the full massive modular Hamiltonian for mass $m = 0.02/\ell$.

We note that some terms of the first-order massive correction for the modular Hamiltonian have been previously computed by Arias, Blanco, Casini and Huerta~\cite[Appx.~A.2]{AriasBlancoCasiniHuerta:2017}. Compared to our full result, one singular contribution is missing, and the nonsingular terms were not determined. The difference between the singular terms of our result and the one given in \cite[Appx.~A.2]{AriasBlancoCasiniHuerta:2017} is a logarithmic antilocal term 
\begin{equation}
\left[ H^{(1)}_{12}(x,y) \right]_\text{sing.} - \left[ H^{(1)}_{12}(x,y) \right]^{\text{\cite{AriasBlancoCasiniHuerta:2017}}} = \mathi \pi m \frac{\ell^2 - x^2}{\ell} \ln\left( \frac{\ell^2 - x^2}{\ell^2} \mathe^{- 2 \gamma_\mathrm{E}} \right) \delta(x+y) \eqend{,}
\end{equation}
which is included in the contribution shown in figure~\ref{fig:ModularHamiltonian.MassiveFirstOrder.Antilocal}. 
As explicit result for the small nonsingular contribution, we found 
\begin{equation}
\left[ H^{(1)}_{12}(x,y) \right]_\text{nonsing.} = \mathi \pi m \frac{\abs{x-y}}{4 \ell} \eqend{,}
\end{equation}
see also figure~\ref{fig:ModularHamiltonian.MassiveFirstOrder.Abs}. 

In principle, the perturbative calculation that we demonstrated for the first-order corrections can be continued to higher orders. As indicated in \eqref{eq:ModularHamiltonian.Result}, starting from the second order, further corrections change all spinor components of the modular Hamiltonian making the computations increasingly involved. Moreover, one needs both to compute higher-order corrections of the perturbation operator $\mathcal{K}$ as defined in~\eqref{eq:PerturbationOperator}, and higher-order corrections to the perturbation formula~\eqref{eq:PerturbationLogA}. Although our technique does not give a full nonperturbative result for arbitrary masses, such a result could be obtained if one would show that the perturbative expansion in the mass converges. This would be the case if one could derive suitable bounds on the full perturbation operator~\cite{Friedrichs:1938,Friedrichs:1948}, and we note that such bounds do hold for the kernel~\eqref{eq:PerturbationOperator.MatrixElement.Solution} of the first-order operator, which is a smooth function of fast decay.

Of course, it would be very interesting to extend the perturbative approach to the modular Hamiltonian of fermions on other spacetimes or in other quantum states. For example, one could consider an extension to a spatially periodic, flat spacetime based on the massless modular Hamiltonian reported in~\cite{KlichVamanWong:2017}. Unfortunately, the extension to bosonic fields is plagued with difficulties since the massless limit for a scalar field on Minkowski is infrared-divergent. Therefore, it seems that a different approach would be needed for those cases.

\begin{acknowledgments}
This work has been funded by the Deutsche Forschungsgemeinschaft (DFG, German Research Foundation) --- project no. 396692871 within the Emmy Noether grant CA1850/1-1.
We thank Henning Bostelmann, Horacio Casini and Erik Tonni for discussions about the modular Hamiltonian as well as the presentation and results of this work. We also thank the anonymous referees for their useful comments and suggestions, especially regarding the idea for the proof of Lemma~\ref{lemma:modular}.
\end{acknowledgments}

\appendix

\section{Free fermions}
\label{sec:AppendixFreeFermions}

In this appendix, we include the calculations for the Wightman two-point function for free fermions. We recall that we use the conventions of~\cite{FreedmanVanProeyen:2012}, and we denote spacetime points by $(t,x)$ and $(s,y)$.

The Dirac adjoint, which for the choice~\eqref{eq:DiracMatrices} of the Dirac matrices agrees with the Majorana adjoint, is given by
\begin{equation}
\label{eq:DiracAdjoint}
\bar\psi \coloneqq \mathi \psi^\dagger \gamma^0 = ( - \mathi \psi_2^*, \mathi \psi_1^* ) \eqend{.}
\end{equation}
Writing a dot for a time derivative and a prime for a space derivative, the Dirac action reads
\begin{splitequation}
S &= - \int \bar\psi \left( \gamma^\mu \partial_\mu - m \right) \psi \total t \total x \\
&= \int \left( \mathi \psi_2^* \dot\psi_2 + \mathi \psi_1^* \dot\psi_1 + \mathi \psi_2^* \psi_2' - \mathi \psi_1^* \psi_1' \right) \total^2 x + m \int \left( - \mathi \psi_2^* \psi_1 + \mathi \psi_1^* \psi_2 \right) \total t \total x \eqend{.}
\end{splitequation}
The equation of motion follows as
\begin{equation}
\left( \gamma^\mu \partial_\mu - m \right) \psi = 0 \eqend{,}
\end{equation}
and splits into the two equations
\begin{equation}
\dot\psi_2 + \psi_2' - m \psi_1 = 0 \eqend{,} \quad - \dot\psi_1 + \psi_1' - m \psi_2 = 0 \eqend{.}
\end{equation}

To compute the Wightman two-point function in the Minkowski vacuum, we start from the time-ordered (Feynman) propagator
\begin{equation}
\mathcal{G}^\text{F}(t,x;s,y) = - \mathi \expect{ \mathcal{T} \psi(t,x) \bar\psi(s,y) } \eqend{,}
\end{equation}
which satisfies
\begin{equation}
- \left( \gamma^\mu \partial_\mu - m \right) \mathcal{G}^\text{F}(t,x;s,y) = \delta(t-s) \delta(x-y) \eqend{.}
\end{equation}
In Fourier space, we easily obtain
\begin{splitequation}
\mathcal{G}^\text{F}(t,x;s,y) &= \lim_{\epsilon \to 0^+} \int \frac{\mathi \gamma^\mu p_\mu + m}{p^\mu p_\mu + m^2 - \mathi \epsilon} \mathe^{- \mathi p^0 (t-s) + \mathi p^1 (x-y)} \frac{\total^2 p}{(2\pi)^2} \\
&= \left( \gamma^\mu \partial_\mu + m \right) \lim_{\epsilon \to 0^+} \int \frac{1}{p^\mu p_\mu + m^2 - \mathi \epsilon} \mathe^{- \mathi p^0 (t-s) + \mathi p^1 (x-y)} \frac{\total^2 p}{(2\pi)^2} \eqend{.}
\end{splitequation}
Integrating over $p^0$, we see that the integral is independent of the sign of $t-s$, and we can thus replace $t-s$ by $\abs{t-s}$. We can then close the integration contour in the lower half of the complex plane, picking up the residue of the pole at $p^0 = \sqrt{\omega(p^1)^2 - \mathi \epsilon}$ with $\omega(p^1) = \sqrt{(p^1)^2 + m^2}$. This gives
\begin{equation}
\mathcal{G}^\text{F}(t,x;s,y) = \frac{\mathi}{4 \pi} \left( \gamma^\mu \partial_\mu + m \right) \int \frac{1}{\omega(p)} \mathe^{- \mathi \omega(p) \abs{t-s} + \mathi p (x-y)} \total p \eqend{,}
\end{equation}
where we renamed $p^1 = p$. Since for $t > s$ the time-ordered propagator coincides with the positive frequency two-point function, we have
\begin{equation}
\mathcal{G}^+(t,x;s,y) = \frac{\mathi}{4 \pi} \left( \gamma^\mu \partial_\mu + m \right) \int \frac{1}{\omega(p)} \mathe^{- \mathi \omega(p) (t-s) + \mathi p (x-y)} \total p \eqend{.}
\end{equation}
Inserting the explicit expression of the Dirac adjoint~\eqref{eq:DiracAdjoint}, this yields
\begin{splitequation}
\expect{ \psi(t,x) \psi^\dagger(s,y) } &= - \mathcal{G}^+(t,x;s,y) \gamma^0 \\
&= \frac{1}{4 \pi} \int \frac{1}{\omega(p)} \left( \omega(p) - p \gamma_* - \mathi m \gamma^0 \right) \mathe^{- \mathi \omega(p) (t-s) + \mathi p (x-y)} \total p \eqend{,}
\end{splitequation}
and restricting to the Cauchy surface $t = 0$ we obtain
\begin{equation}
\label{eq:TwoPointFunction.Wightman.FromFourier}
\mathcal{G}(x,y) = \expect{ \psi(0,x) \psi^\dagger(0,y) } = \frac{1}{4 \pi} \int \frac{1}{\omega(p)} \begin{pmatrix} \omega(p) - p & - \mathi m \\ \mathi m & \omega(p) + p \end{pmatrix} \mathe^{\mathi p (x-y)} \total p \eqend{.}
\end{equation}

This integral needs to be interpreted as a distributional Fourier transform, which means that we need to insert a convergence factor $\mathe^{- \epsilon \abs{p}/m}$ and take the limit $\epsilon \to 0^+$ after integration. We change variables to
\begin{equation}
p = m \sinh s \eqend{,} \quad \total p = m \cosh s \total s \eqend{,} \quad \omega(p) = m \cosh s \eqend{,}
\end{equation}
and compute
\begin{splitequation}
\mathcal{G}(x,y) &= \frac{m}{4 \pi} \lim_{\epsilon \to 0^+} \int \begin{pmatrix} \mathe^{-s} & - \mathi \\ \mathi & \mathe^s \end{pmatrix} \mathe^{\mathi \sinh s [ m (x-y) + \mathi \epsilon \sgn(s) ]} \total s \\
&= \frac{m}{2 \pi} \lim_{\epsilon \to 0^+} \begin{pmatrix} I_1\bigl(m (x-y)\bigr) - \mathi I_2\bigl(m (x-y)\bigr) & - \mathi I_3\bigl(m (x-y)\bigr) \\ \mathi I_3\bigl(m (x-y)\bigr) & I_1\bigl(m (x-y)\bigr) + \mathi I_2\bigl(m (x-y)\bigr) \end{pmatrix}
\end{splitequation}
with the integrals
\begin{equations}
I_1(z) &\coloneqq \int_0^\infty \cos\left( z \sinh s \right) \mathe^{- \epsilon \sinh s} \cosh s \total s \eqend{,} \\
I_2(z) &\coloneqq \int_0^\infty \sin\left( z \sinh s \right) \mathe^{- \epsilon \sinh s} \sinh s \total s \eqend{,} \\
I_3(z) &\coloneqq \int_0^\infty \cos\left( z \sinh s \right) \mathe^{- \epsilon \sinh s} \total s \eqend{.}
\end{equations}
For the first integral we change variables to $\sinh s = t$ and compute
\begin{splitequation}
I_1(z) = \int_0^\infty \cos(z t) \, \mathe^{- \epsilon t} \total t = \frac{\epsilon}{z^2 + \epsilon^2} \to \pi \delta(z) \quad (\epsilon \to 0^+) \eqend{,}
\end{splitequation}
and we note that
\begin{equation}
I_2(z) = - \partial_z I_3(z) \eqend{.}
\end{equation}
In $I_3$ (in the limit $\epsilon \to 0^+$) we recognise the integral representation of the modified Bessel function~\citeDLMFeq{10.32}{6}, such that
\begin{equation}
\lim_{\epsilon \to 0^+} I_3(z) = \bessel{K}0{\abs{z}} \eqend{.}
\end{equation}
For small $z$, we have~\citeDLMFeq{10.31}{2}
\begin{equation}
\bessel{K}0{z} \sim - \gamma_\mathrm{E} - \ln z + \ln 2 + \bigo{z^2 \ln z} \eqend{,}
\end{equation}
and thus
\begin{splitequation}
I_2(z) &= - \partial_z \bessel{K}0{\abs{z}} = - \partial_z \Bigl( \bessel{K}0{\abs{z}} + \ln\abs{z} \Bigr) + \partial_z \ln\abs{z} \\
&= \left( \sgn(z) \bessel{K}1{\abs{z}} - \frac{1}{z} \right) + \pvalue \frac{1}{z} \eqend{,}
\end{splitequation}
where $\pvalue$ denotes the Cauchy principal value. With the results for the integrals $I_i$ and the Sochocki--Plemelj formula
\begin{equation}
\label{eq:SochockiPlemelj}
\lim_{\epsilon \to 0^+} \frac{1}{x-y \pm \mathi \epsilon} = \pvalue \frac{1}{x-y} \mp \mathi \pi \delta(x-y) \eqend{,}
\end{equation}
we then obtain the components~\eqref{eq:TwoPointFunction.Components} of the two-point function.

The two-point function $\mathcal{G}$ also determines the inner product on the complexified one-particle Hilbert space $\mathcal{H} = L^2_\mathbb{R}(\mathbb{R}) \oplus L^2_\mathbb{R}(\mathbb{R})$ of initial data at $t = 0$, which is given by
\begin{equation}
\expect{ f, g }_\mathcal{H} \coloneqq 2 \sum_{a,b} \iint f_a(x) \mathcal{G}_{ab}(x,y) g_b(y) \total x \total y \eqend{.}
\end{equation}
The standard $L^2$ scalar product
\begin{equation}
\left( f, g \right) \coloneqq \sum_a \int f_a(x) g_a(x) \total x
\end{equation}
is the real part of $\expect{ \cdot, \cdot }_\mathcal{H}$, and the imaginary part of $\expect{ \cdot, \cdot }_\mathcal{H}$ is given by
\begin{equation}
- \sum_{a,b} \iint f_a(x) I_{ab}(x,y) g_b(y) \total x \total y = - \left( f, I g \right)
\end{equation}
with the kernel $I_{ab}(x,y)$ of the complex structure $I$, which acts as a convolution operator on initial data. We thus have the relation~\eqref{eq:ComplexStructure} $I = - \mathi ( \1 - 2 \mathcal{G} )$, and using the explicit form~\eqref{eq:TwoPointFunction.Wightman.FromFourier} of the two-point function we obtain
\begin{equation}
\label{eq:ComplexStructure.FourierSpace}
I(x,y) = \int \frac{1}{\omega(p)} \begin{pmatrix} - \mathi p & m \\ - m & \mathi p \end{pmatrix} \mathe^{\mathi p (x-y)} \frac{\total p}{2 \pi} = \frac{1}{\pi} \begin{pmatrix} - \partial_x & m \\ - m & \partial_x \end{pmatrix} \bessel{K}0{m \abs{x-y}} \eqend{.}
\end{equation}
This is clearly real-valued and anti-hermitean (with respect to the real $L^2$ scalar product), and we compute
\begin{equation}
I^2(x,y) = \int \left[ \frac{1}{\omega(p)} \begin{pmatrix} - \mathi p & m \\ - m & \mathi p \end{pmatrix} \right]^2 \mathe^{\mathi p (x-y)} \frac{\total p}{2 \pi} = - \delta(x-y) \begin{pmatrix} 1 & 0 \\ 0 & 1 \end{pmatrix} \eqend{,}
\end{equation}
such that $I$ indeed satisfies the properties of a complex structure. We thus have
\begin{equation}
\expect{ f, g }_\mathcal{H} = \left( f, g \right) - \mathi \left( f, I g \right) \eqend{,}
\end{equation}
and using this decomposition and the above properties of $I$, it is straightforward to verify that
\begin{equation}
\expect{ f, I g }_\mathcal{H} = \mathi \expect{ f, g }_\mathcal{H} \eqend{,} \quad \expect{ I f, g }_\mathcal{H} = - \mathi \expect{ f, g }_\mathcal{H} \eqend{,}
\end{equation}
such that the scalar product $\expect{ \cdot, \cdot }_\mathcal{H}$ is compatible with the complex structure. Furthermore, since $I$ in Fourier space~\eqref{eq:ComplexStructure.FourierSpace} is given by a polynomial in $p$ times $1/\omega(p) = \left( p^2 + m^2 \right)^{- \frac{1}{2}}$, it follows from~\cite[Remark]{SegalGoodman:1965} (or by adapting the arguments in~\cite{Masuda:1972,Murata:1973}) that $I$ is an antilocal operator. That is, any twice continuously differentiable function $f \in C^2_\mathbb{R}(\mathbb{R}) \oplus C^2_\mathbb{R}(\mathbb{R})$ for which both $f$ and $I f$ vanish in some interval is identically zero. Since it is well known that $C^2_\mathbb{R}(\mathbb{R}) \oplus C^2_\mathbb{R}(\mathbb{R})$ is dense in $\mathcal{H}$, we see that the set of vectors $h \in \mathcal{H}$ for which $(\1-E) I E h \neq 0$ is dense in the subspace $\mathcal{L} = E \mathcal{H}$, where we recall that $E$ is the orthogonal projection on the interval $V = [-\ell,\ell]$.

\section{Computations for the massless decomposition}
\label{appx:MasslessDecomposition}

In this appendix, we include all calculations for the results on the massless modular Hamiltonian of a double cone in $(1+1)$-dimensional Minkowski spacetime as given in section~\ref{sec:ModularHamiltonian.Massless}.

We recall that the components of the massless two-point function $\mathcal{G}_V$ with integral kernels given by~\eqref{eq:TwoPointFunction.Components.Massless} have the generalized eigenvectors $\varPsi^{(k)}_a(s,x) = \delta^k_a \varPsi(s,x)$~\eqref{eq:MasslessEigenvector} for $x \in (-\ell,\ell)$. The functions
\begin{equation}
\varPsi(s,x) = \sqrt{ \frac{\ell}{\pi} } \frac{1}{\sqrt{\ell^2 - x^2}} \left( \frac{\ell - x}{\ell + x} \right)^{\mathi s}
\end{equation}
satisfy properties~\eqref{eq:MasslessEigenvector.Properties}, $\varPsi^*(s,x) = \varPsi(-s,x) = \varPsi(s,-x)$.
The corresponding eigenvalues read~\eqref{eq:MasslessEigenvalue.Properties}
\begin{equation}
\lambda^{(1)}(s) = \frac{1}{1 + \mathe^{- 2 \pi s}} \eqend{,} \quad \lambda^{(2)}(s) = \frac{1}{1 + \mathe^{2 \pi s}} = \lambda^{(1)}(-s) \eqend{.}
\end{equation}

To verify that these are the correct eigenvectors and eigenvalues, let 
\begin{equation}
\label{eq:MasslessEigenvector.Integral}
\varUpsilon_\pm( s, x ) \coloneqq \mp \lim_{\epsilon \to 0^+} \int_{-\ell}^\ell \frac{1}{x - y \mp \mathi \epsilon} \varPsi(s,y) \total y \eqend{.}
\end{equation}
We only need to show that 
\begin{equation}
\label{eq:MasslessEigenvector.IntegralMinus.Solution}
\varUpsilon_-( s, x ) = - \frac{2 \pi \mathi}{1 + \mathe^{2 \pi s}} \varPsi(s,x) \eqend{,}
\end{equation}
since the second eigenvector integral $\varUpsilon_+( s, x )$ follows directly from this result, using that the two integral kernels are related by the Sochocki--Plemelj formula~\eqref{eq:SochockiPlemelj} according to $\varUpsilon_+( s, x ) = - \varUpsilon_-( s, x ) - 2 \pi \mathi \varPsi( s, x )$. 

Hence, it suffices to prove \eqref{eq:MasslessEigenvector.IntegralMinus.Solution}, for which we perform the variable transformation
\begin{equation}
\label{eq:VariableTransformation}
y = \ell \tanh(\pi w) \eqend{,} \quad \total y = \pi \ell \cosh^{-2}(\pi w) \total w \eqend{.}
\end{equation}
This results in
\begin{splitequation}
\varUpsilon_-( s, x ) &= \sqrt{ \ell \pi } \lim_{\epsilon \to 0^+} \int_{-\infty}^\infty \frac{1}{x - \ell \tanh(\pi w) + \mathi \epsilon} \frac{\mathe^{- 2 \pi \mathi s w}}{\cosh(\pi w)} \total w \\
&= \sqrt{ \ell \pi } \lim_{\epsilon \to 0^+} \int_{-\infty}^\infty \xi_\epsilon(s,w,x) \total w \eqend{,}
\end{splitequation}
where we defined
\begin{equation}
\xi_\epsilon(s,w,x) \coloneqq \frac{\mathe^{- 2 \pi \mathi s w}}{x \cosh(\pi w) - \ell \sinh(\pi w) + \mathi \epsilon} \eqend{.}
\end{equation}
To evaluate the integral, we use the Cauchy residue theorem and integrate over the contour depicted in figure~\ref{fig:MasslessContour-w-plane}. For $s < 0$, we close the contour in the upper half plane (red), while for $s > 0$ we close the contour in the lower half plane (blue).
\begin{figure}
    \includegraphics{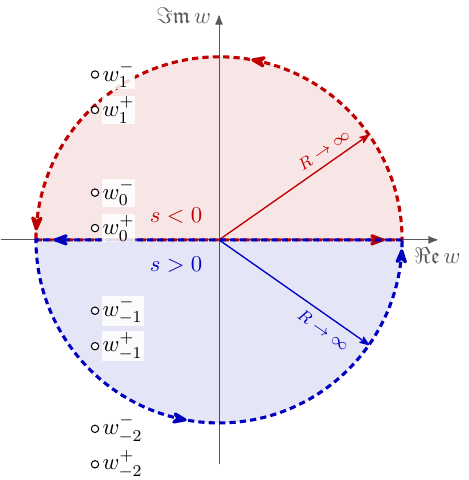}
    \caption{The integration contour (dashed) in the upper half plane (red) has a vanishing contribution for $s < 0$ and the one in the lower half plane (blue) has a vanishing contribution for $s > 0$ along the respective semicircles. Either contour integral passes over a sequence of poles $w^\pm_k$ for $k \geq 0$ or $k < 0$, respectively, since $\epsilon > 0$. In the limit $\epsilon \to 0^+$, the real part of the poles reads $\frac{1}{2 \pi} \ln\left( \frac{\ell + x}{\ell - x} \right)$.}
    \label{fig:MasslessContour-w-plane}
\end{figure}
The integrand $\xi_\epsilon(s,w,x)$ has an infinite series of poles at the points
\begin{splitequation}
w_k^\pm &= \frac{1}{\pi} \ln\left( \frac{\sqrt{\ell^2-x^2-\epsilon^2} \pm \mathi \epsilon}{\ell-x} \right) + 2 \mathi k + \frac{\mathi}{2} (1 \mp 1) \eqend{,} \quad k \in \mathbb{Z} \eqend{,}
\end{splitequation}
of which the $w_k^+$ with $k \geq 0$ and the $w_k^-$ with $k \geq 0$ lie in the upper half plane, and the $w_k^+$ with $k < 0$ and the $w_k^-$ with $k < 0$ lie in the lower half plane. We thus obtain
\begin{splitequation}
\varUpsilon_-( s, x ) &= 2 \pi \mathi \sqrt{ \ell \pi } \lim_{\epsilon \to 0^+} \Biggl[ \Theta(-s) \left( \sum_{k=0}^\infty \Res_{w = w_k^+} + \sum_{k=0}^\infty \Res_{w = w_k^-} \right) \xi_\epsilon(s,w,x) \\
&\hspace{7em}- \Theta(s) \left( \sum_{k=1}^\infty \Res_{w = w_{-k}^+} + \sum_{k=1}^\infty \Res_{w = w_{-k}^-} \right) \xi_\epsilon(s,w,x) \Biggr] \\
&= 2 \mathi \sqrt{ \ell \pi } \lim_{\epsilon \to 0^+} \Biggl[ \Theta(-s) \sum_{k=0}^\infty \frac{\mathe^{- 2 \pi \mathi s w_k^-} - \mathe^{- 2 \pi \mathi s w_k^+}}{\sqrt{\ell^2 - x^2 - \epsilon^2}} \\
&\hspace{7em}+ \Theta(s) \sum_{k=1}^\infty \frac{\mathe^{- 2 \pi \mathi s w_{-k}^+} - \mathe^{- 2 \pi \mathi s w_{-k}^-}}{\sqrt{\ell^2 - x^2 - \epsilon^2}} \Biggr] \\
&= \frac{2 \mathi \sqrt{ \ell \pi }}{\sqrt{\ell^2 - x^2}}
\Bigg[ \Theta(-s) \sum_{k=0}^\infty \left( \mathe^{- \mathi s \ln\left( \frac{\ell+x}{\ell-x} \right) + 4 \pi s k + 2 \pi s} - \mathe^{- \mathi s \ln\left( \frac{\ell+x}{\ell-x} \right) + 4 \pi s k} \right) \\
&\hspace{7em}+ \Theta(s) \sum_{k=1}^\infty \left( \mathe^{- \mathi s \ln\left( \frac{\ell+x}{\ell-x} \right) - 4 \pi s k} - \mathe^{- \mathi s \ln\left( \frac{\ell+x}{\ell-x} \right) - 4 \pi s k + 2 \pi s} \right) \Bigg] \\
&= - \frac{2 \mathi \sqrt{ \ell \pi }}{\sqrt{\ell^2 - x^2}} \frac{\mathe^{- \mathi s \ln\left( \frac{\ell+x}{\ell-x} \right)}}{1 + \mathe^{2 \pi s}} \\
&= - \frac{2 \pi \mathi}{1 + \mathe^{2 \pi s}} \varPsi(s,x) \eqend{,}
\end{splitequation}
which is exactly what we wanted to prove.

We also need to show that the eigenvector functions $\varPsi( s, x )$ are orthogonal, normalised in the distributional sense,
\begin{equation}
\int_{-\ell}^\ell \varPsi^*(s,x) \varPsi(t,x) \total x = \delta(s-t) \eqend{,}
\end{equation}
and that they form a complete basis,
\begin{equation}
\int_{-\infty}^\infty \varPsi^*(s,x) \varPsi(s,y) \total s = \delta(x-y) \eqend{,}
\end{equation}
since these properties then imply that the generalised eigenvectors are orthonormal and complete according to~\eqref{eq:Eigenvector.Properties}.

To show orthogonality, we perform the same change of variable~\eqref{eq:VariableTransformation} (with $x$ instead of $y$) and obtain the required result
\begin{splitequation}
\int_{-\ell}^\ell \varPsi^*(s,x) \varPsi(t,x) \total x
&= \frac{\ell}{\pi} \int_{-\ell}^\ell \frac{\left( \frac{\ell + x}{\ell - x} \right)^{\mathi s}}{\sqrt{\ell^2 - x^2}} \frac{\left( \frac{\ell - x}{\ell + x} \right)^{\mathi t}}{\sqrt{\ell^2 - x^2}} \total x \\
&= \int_{-\infty}^\infty \mathe^{2 \pi \mathi w (s-t)} \total w \\
&= \delta(s-t) \eqend{.}
\end{splitequation}
To show completeness, we compute
\begin{splitequation}
\int_{-\infty}^\infty \varPsi^*(s,x) \varPsi(s,y) \total s
&= \frac{\ell}{\pi} \int_{-\infty}^\infty \frac{\left( \frac{\ell + x}{\ell - x} \right)^{\mathi s}}{\sqrt{\ell^2 - x^2}} \frac{\left( \frac{\ell - y}{\ell + y} \right)^{\mathi s}}{\sqrt{\ell^2 - y^2}} \total s \\
&= \frac{\ell}{\pi} \frac{1}{\sqrt{ (\ell^2-x^2) (\ell^2-y^2) }} \int_{-\infty}^\infty \exp\left[ \mathi s \ln\left( \frac{\ell+x}{\ell-x} \frac{\ell-y}{\ell+y} \right) \right] \total s \\
&= 2 \ell \frac{\delta\left[ \ln\left( \frac{\ell+x}{\ell-x} \frac{\ell-y}{\ell+y} \right) \right]}{\sqrt{ (\ell^2-x^2) (\ell^2-y^2) }} \\
&= \delta(x-y) \eqend{.}
\end{splitequation}
Note that the integrals converge only in a distributional sense, which was to be expected since we are considering generalised eigenfunctions. In the last step, we also used the composition formula
\begin{equation}
\label{eq:Delta.Composition}
\delta\bigl[ f(x) \bigr] = \sum_i \frac{1}{\abs{ f'(x_i) }} \delta(x-x_i) \eqend{,}
\end{equation}
which holds for any smooth function $f$ with simple zeros $x_i$. For any fixed value $y \in (-\ell,\ell)$, the logarithm $f(x) = \ln\left( \frac{\ell + x}{\ell - x} \frac{\ell - y}{\ell + y} \right)$ is defined for all values $x \in (-\ell,\ell)$ and it has exactly one simple zero at $x = y$, such that this formula was applicable.

Next, we compute the integral for the propagator~\eqref{eq:TwoPointFunction.Component11.SpectralDecomposition} that is of the form 
\begin{equation}
\int_{-\infty}^\infty \frac{1}{c + \mathe^{2 \pi s}} \mathe^{\mathi s z} \total s
\end{equation}
with a constant $c > 0$. 
After inserting a convergence factor $\mathe^{\epsilon s}$, we can again use the Cauchy theorem and close the contour in either the upper or lower half plane depending on the sign of $z$. 
There is an infinite series of poles at
\begin{equation}
s_k = \frac{\log c}{2 \pi} + \mathi \left( k + \frac{1}{2} \right) \eqend{,}
\end{equation}
as shown in figure~\ref{fig:MasslessContour-s-plane}. 
\begin{figure}
    \includegraphics{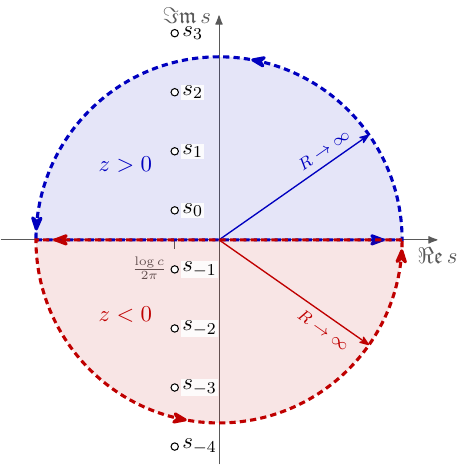}
    \caption{The integration contour (dashed) in the upper half plane (blue) has a vanishing contribution for $z > 0$ and the one in the lower half plane (red) has a vanishing contribution for $z < 0$ along the respective semicircles. Either contour integral passes over a sequence of poles $s_k$ for $k \geq 0$ or $k < 0$, respectively, when $\epsilon > 0$.}
    \label{fig:MasslessContour-s-plane}
\end{figure}
Summing the residues yields 
\begin{splitequation}
&\int_{-\infty}^\infty \frac{1}{c + \mathe^{2 \pi s}} \mathe^{\mathi s z} \total s \\
&\quad= 2 \pi \mathi \lim_{\epsilon \to 0^+} \bigg[ \Theta(z) \sum_{k=0}^\infty \Res_{s = s_k} \frac{\mathe^{\epsilon s}}{c + \mathe^{2 \pi s}} \mathe^{\mathi s z} - \Theta(-z) \sum_{k=1}^\infty \Res_{s = s_{-k}} \frac{\mathe^{\epsilon s}}{c + \mathe^{2 \pi s}} \mathe^{\mathi s z} \bigg] \\
&\quad= \frac{\mathi}{c} \lim_{\epsilon \to 0^+} \bigg[ - \Theta(z) c^\frac{\epsilon + \mathi z}{2 \pi} \sum_{k=0}^\infty \mathe^{- (z - \mathi \epsilon) \left( k + \frac{1}{2} \right)} + \Theta(-z) c^\frac{\epsilon + \mathi z}{2 \pi} \sum_{k=1}^\infty \mathe^{- (z - \mathi \epsilon) \left( - k + \frac{1}{2} \right)} \bigg] \\
&\quad= - \frac{\mathi}{c} \lim_{\epsilon \to 0^+} \left[ c^\frac{\epsilon + \mathi z}{2 \pi} \frac{\mathe^\frac{z - \mathi \epsilon}{2}}{\mathe^{z - \mathi \epsilon} - 1} \right] \\
&\quad= - \frac{\mathi}{2 c} \lim_{\epsilon \to 0^+} \left[ c^\frac{\epsilon + \mathi z}{2 \pi} \sinh^{-1}\left( \frac{z - \mathi \epsilon}{2} \right) \right] \eqend{.}
\end{splitequation}

Lastly, we compute the integral in expression~\eqref{eq:ModularHamiltonian.Interval.Massless} for the massless modular Hamiltonian, which is
\begin{splitequation}
H_{ab}(x,y) &= 2 \ell \gamma_{*ab} \int_{-\infty}^\infty \frac{s}{\sqrt{ (\ell^2-x^2) (\ell^2-y^2) }} \left( \frac{\ell - x}{\ell + x} \right)^{\mathi s} \left( \frac{\ell + y}{\ell - y} \right)^{\mathi s} \total s \\
&\quad= \frac{2 \ell \gamma_{*ab}}{\sqrt{ (\ell^2-x^2) (\ell^2-y^2) }} \left[ \mathi \partial_z \int_{-\infty}^\infty \mathe^{- \mathi s z} \total s \right]_{z = \ln\left( \frac{\ell+x}{\ell-x} \frac{\ell-y}{\ell+y} \right)} \\
&\quad= \frac{4 \mathi \ell \pi \gamma_{*ab}}{\sqrt{ (\ell^2-x^2) (\ell^2-y^2) }} \delta'\left[ \ln\left( \frac{\ell+x}{\ell-x} \frac{\ell-y}{\ell+y} \right) \right] \\
&\quad= 4 \mathi \ell \pi \gamma_{*ab} \sqrt{ \frac{\ell^2-y^2}{\ell^2-x^2} } \frac{2 y \delta(x-y) + (\ell^2-y^2) \delta'(x-y)}{4 \ell^2} \\
&\quad= 2 \ell \gamma_{*ab} \pi \mathi \frac{2 \ell^2 - x^2 - y^2}{4 \ell^2} \delta'(x-y)
\end{splitequation}
using the generalisation of the above composition formula
\begin{equation}
\label{eq:DeltaDeriv.Composition}
\delta'(f(x)) = \sum_i \frac{f''(x_i) \delta(x-x_i) + f'(x_i) \delta'(x-x_i)}{\abs{f'(x_i)}^3} \eqend{,}
\end{equation}
valid for any smooth function $f$ with simple zeros $x_i$. Again this condition is fulfilled, the integral converged in a distributional sense, and we have used in addition that
\begin{equation}
\label{eq:DeltaDeriv.Product}
f(y) \delta'(x-y) = f(x) \delta'(x-y) + f'(x) \delta(x-y) \eqend{,}
\end{equation}
as can be easily verified by integrating with a test function in $y$.

\section{Finite-part distributions}
\label{appx:FinitePartDistribution}

In this section, we give details on the distribution~\eqref{eq:ModifiedPrincipalValueDistribution}
\begin{equation}
\pvalue_\mu \frac{1}{\abs{x}} \coloneqq \lim_{\epsilon \to 0^+} \left[ \frac{\Theta(x-\epsilon) - \Theta(-x-\epsilon)}{x} + 2 \ln\left( \mu \epsilon \, \mathe^{-\gamma_\mathrm{E}} \right) \delta(x) \right] \eqend{,}
\end{equation}
which depends on a parameter $\mu$ with dimensions of mass to make the logarithm dimensionless, and related distributions. We first show that $\pvalue_\mu \frac{1}{\abs{x}}$ is a well-defined distribution, by letting it act on a test function $f$. We integrate by parts, change $x \to -x$ in the second term, and simplify to obtain
\begin{splitequation}
\int f(x) \pvalue_\mu \frac{1}{\abs{x}} \total x &= \lim_{\epsilon \to 0^+} \int f(x) \left[ \frac{\Theta(x-\epsilon) - \Theta(-x-\epsilon)}{x} + 2 \ln\left( \mu \epsilon \, \mathe^{-\gamma_\mathrm{E}} \right) \delta(x) \right] \total x \\
&= \lim_{\epsilon \to 0^+} \left[ \int_\epsilon^\infty \frac{f(x)}{x} \total x - \int_{-\infty}^{-\epsilon} \frac{f(x)}{x} \total x + 2 \ln\left( \mu \epsilon \, \mathe^{-\gamma_\mathrm{E}} \right) f(0) \right] \\
&= \lim_{\epsilon \to 0^+} \Big\lbrack - \int_\epsilon^\infty [ f'(x) - f'(-x) ] \ln(\mu x) \total x - f(\epsilon) \ln( \mu \epsilon ) \\
&\quad \quad \quad \quad - f(-\epsilon) \ln( \mu \epsilon ) + 2 \ln\left( \mu \epsilon \, \mathe^{-\gamma_\mathrm{E}} \right) f(0) \Big\rbrack \eqend{.}
\end{splitequation}
Since the logarithm is an integrable function, we take the limit $\epsilon \to 0^+$, and then all terms containing $\ln \epsilon$ cancel such that
\begin{equation}
\int f(x) \pvalue_\mu \frac{1}{\abs{x}} \total x = - \int_0^\infty [ f'(x) - f'(-x) ] \ln(\mu x) \total x - 2 \gamma_\mathrm{E} f(0) \eqend{,}
\end{equation}
showing that $\pvalue_\mu \frac{1}{\abs{x}}$ is a well-defined distribution. 

Analogously, we define the distributions
\begin{equation}
\pvalue_\mu \frac{\Theta(\pm x)}{x} \coloneqq \lim_{\epsilon \to 0^+} \left[ \frac{\Theta(\pm x-\epsilon)}{x} \pm \ln\left( \mu \epsilon \, \mathe^{-\gamma_\mathrm{E}} \right) \delta(x) \right] \eqend{,}
\end{equation}
and compute that
\begin{equation}
\label{eq:PfTheta.Appendix}
\int f(x) \pvalue_\mu \frac{\Theta(\pm x)}{x} \total x = - \int_0^\infty f'(\pm x) \ln(\mu x) \total x \mp \gamma_\mathrm{E} f(0) \eqend{.}
\end{equation}
By construction, we have
\begin{equation}
\label{eq:PfTheta.Relation.Appendix}
\pvalue_\mu \frac{1}{\abs{x}} = \pvalue_\mu \frac{\Theta(x)}{x} - \pvalue_\mu \frac{\Theta(-x)}{x} \eqend{,}
\end{equation}
and all of these distributions are almost homogeneous of degree $-1$, which can be computed directly from the definition. Namely, for $\lambda > 0$ we obtain
\begin{splitequation}
\label{eq:PfTheta.Scaling.Appendix}
\pvalue_\mu \frac{\Theta(\pm \lambda x)}{\lambda x} &= \lim_{\epsilon \to 0^+} \left[ \frac{\Theta(\pm \lambda x-\epsilon)}{\lambda x} \pm \ln\left( \mu \epsilon \, \mathe^{-\gamma_\mathrm{E}} \right) \delta(\lambda x) \right] \\
&= \lambda^{-1} \lim_{\lambda \epsilon \to 0^+} \left[ \frac{\Theta(\pm \lambda x - \lambda \epsilon)}{x} \pm \ln\left( \mu \lambda \epsilon \, \mathe^{-\gamma_\mathrm{E}} \right) \delta(x) \right] \\
&= \lambda^{-1} \lim_{\epsilon \to 0^+} \left[ \frac{\Theta(\pm x - \epsilon)}{x} \pm \ln\left( \mu \lambda \epsilon \, \mathe^{-\gamma_\mathrm{E}} \right) \delta(x) \right] \\
&= \lambda^{-1} \left[ \pvalue_\mu \frac{\Theta(\pm x)}{x} \pm \ln \lambda \, \delta(x) \right] \eqend{,}
\end{splitequation}
where we relabeled $\epsilon \to \lambda \epsilon$ in passing from the first to the second equality, and used formula~\eqref{eq:Delta.Composition}. Since $\lambda > 0$ is fixed, the limit $\lambda \epsilon \to 0^+$ is the same as $\epsilon \to 0^+$, and we could drop $\lambda$ from the limit in the next equality. The almost homogeneous scaling is then expressed by the relation
\begin{equation}
\left( \lambda \partial_\lambda \right)^2 \left[ \lambda \pvalue_\mu \frac{\Theta(\pm \lambda x)}{\lambda x} \right] = 0 \eqend{;}
\end{equation}
for a homogeneous scaling the power of the Euler operator $\lambda \partial_\lambda$ would have to be equal to $1$. Analogously, one obtains
\begin{equation}
\pvalue_\mu \frac{1}{\abs{\lambda x}} = \lambda^{-1} \left[ \pvalue_\mu \frac{1}{\abs{x}} + 2 \ln \lambda \, \delta(x) \right] \eqend{.}
\end{equation}
We are also interested in changes of the scale parameter $\mu$. These are even easier to compute, and we obtain
\begin{equations}[eq:PfTheta.ParameterScaling.Appendix]
\pvalue_{\lambda\mu} \frac{\Theta(\pm x)}{x} &= \pvalue_\mu \frac{\Theta(\pm x)}{x} \pm \ln \lambda \, \delta(x) \eqend{,} \\
\pvalue_{\lambda\mu} \frac{1}{\abs{x}} &= \pvalue_\mu \frac{1}{\abs{x}} + 2 \ln \lambda \, \delta(x) \eqend{.}
\end{equations}

Later on, we also need to compute distributional limits where the Heaviside $\Theta$ distribution has a more complicated dependence on $x$, namely the limit
\begin{equation}
\lim_{\epsilon \to 0^+} \frac{1}{x} \Theta\left( x - \left[ (1+x)^2 - a^2 \right] \epsilon \right)
\end{equation}
for $a^2 < 1$. Smearing with a test function $f \in \mathcal{S}(\mathbb{R})$, we compute
\begin{splitequation}
\label{eq:PfThetaCoordinate.Appendix}
&\int f(x) \frac{1}{x} \Theta\left( x - \left[ (1+x)^2 - a^2 \right] \epsilon \right) \total x \\
&\quad= \int f(x) \partial_x \ln(\mu x) \Theta\left( x - \left[ (1+x)^2 - a^2 \right] \epsilon \right) \total x \\
&\quad= - \int f'(x) \ln(\mu x) \Theta\left( x - \left[ (1+x)^2 - a^2 \right] \epsilon \right) \total x \\
&\qquad- \int f(x) \ln(\mu x) \left[ 1 - 2 (1+x) \epsilon \right] \delta\left( x - \left[ (1+x)^2 - a^2 \right] \epsilon \right) \total x \eqend{,}
\end{splitequation}
where we did an integration by parts without boundary terms since the test function $f$ decays faster than any polynomial at infinity, and where $\mu$ is a parameter. The last integral can be evaluated using the composition formula~\eqref{eq:Delta.Composition} for the Dirac $\delta$ distribution, and results in
\begin{splitequation}
&\int f(x) \ln(\mu x) \left[ 1 - 2 (1+x) \epsilon \right] \delta\left( x - \left[ (1+x)^2 - a^2 \right] \epsilon \right) \total x \\
&\quad= - f(x_+) \ln(\mu x_+) + f(x_-) \ln(\mu x_-) \eqend{,}
\end{splitequation}
where we defined
\begin{equation}
x_\pm \coloneqq \frac{1 - 2 \epsilon \pm \sqrt{ 1 - 4 \epsilon + 4 \epsilon^2 a^2 }}{2 \epsilon} \eqend{.}
\end{equation}
For small $\epsilon$, we have $x_- = (1-a^2) \epsilon + \bigo{\epsilon^2}$ and $x_+ = \frac{1}{\epsilon} + \bigo{\epsilon^0}$. Since the test function $f$ decays faster than any polynomial at infinity, in the limit $\epsilon \to 0^+$ the term with $f(x_+)$ vanishes. On the other hand, the term with $f(x_-)$ has a logarithmic divergence as $\epsilon \to 0^+$. 
Since in the first integral in the last equality of~\eqref{eq:PfThetaCoordinate.Appendix} the logarithm $\ln(\mu x)$ is integrable at $x = 0$, we can take the limit $\epsilon \to 0^+$ inside that integral. Taking all together, it follows that
\begin{splitequation}
\label{eq:PfThetaCoordinate2.Appendix}
&\lim_{\epsilon \to 0^+} \left[ \int f(x) \frac{1}{x} \Theta\left( x - \left[ (1+x)^2 - a^2 \right] \epsilon \right) \total x + \ln\left[ \mu (1-a^2) \epsilon \right] f(0) \right] \\
&\quad= - \int_0^\infty f'(x) \ln(\mu x) \total x = \int f(x) \pvalue_\mu \frac{\Theta(x)}{x} \total x + \gamma_\mathrm{E} f(0) \eqend{,}
\end{splitequation}
where we compared with the result~\eqref{eq:PfTheta.Appendix}. We have thus proven the distributional limit
\begin{equation}
\label{eq:PfThetaCoordinate.Limit.Appendix}
\lim_{\epsilon \to 0^+} \left[ \frac{1}{x} \Theta\left( x - \left[ (1+x)^2 - a^2 \right] \epsilon \right) + \ln\left[ \mu (1-a^2) \epsilon \, \mathe^{-\gamma_\mathrm{E}} \right] \delta(x) \right] = \pvalue_\mu \frac{\Theta(x)}{x} \eqend{,}
\end{equation}
and an analogous computation establishes that
\begin{equation}
\label{eq:PfThetaCoordinate.Limit2.Appendix}
\lim_{\epsilon \to 0^+} \left[ \frac{1}{x} \Theta\left( - x - \left[ (1-x)^2 - a^2 \right] \epsilon \right) - \ln\left[ \mu (1-a^2) \epsilon \, \mathe^{-\gamma_\mathrm{E}} \right] \delta(x) \right] = \pvalue_\mu \frac{\Theta(-x)}{x} \eqend{.}
\end{equation}

\section{Computations for the first-order massive corrections}
\label{appx:FirstOrderMassiveCorrection}

In this appendix, we include all details of the computations for the main results presented in section~\ref{sec:MassiveModularHamiltonian}. 

First, we compute the coefficient of the perturbation operator~\eqref{eq:PerturbationOperator.MatrixElement} from subsection~\ref{sec:PerturbationOperator}. As in the massless case, we transform the integration variables 
\begin{splitequation}
x &= \ell \tanh(\pi v) \eqend{,} \quad \total x = \ell \pi \cosh^{-2}(\pi v) \total v \eqend{,} \\
y &= \ell \tanh(\pi w) \eqend{,} \quad \total y = \ell \pi \cosh^{-2}(\pi w) \total w \eqend{,}
\end{splitequation}
and obtain
\begin{splitequation}
\label{eq:PerturbationOperator.MatrixElement.Integrating}
K(s,t) &= \frac{\mathi}{2} \iint_{-\infty}^\infty \ln\left( \frac{m \ell \abs{\tanh(\pi v) - \tanh(\pi w)}}{2} \mathe^{\gamma_\mathrm{E}} \right) \frac{\mathe^{2 \pi \mathi v s - 2 \pi \mathi w t}}{\cosh(\pi v) \cosh(\pi w)} \total v \total w \\
&= \frac{\mathi}{2} \ln\left( m \ell \, \mathe^{\gamma_\mathrm{E}} \right) \iint_{-\infty}^\infty \frac{\mathe^{2 \pi \mathi v s - 2 \pi \mathi w t}}{\cosh(\pi v) \cosh(\pi w)} \total v \total w \\
&\quad- \frac{\mathi}{2} \iint_{-\infty}^\infty \ln\left( 1 + \mathe^{2 \pi v} \right) \frac{\mathe^{2 \pi \mathi v s - 2 \pi \mathi w t}}{\cosh(\pi v) \cosh(\pi w)} \total v \total w \\
&\quad- \frac{\mathi}{2} \iint_{-\infty}^\infty \ln\left( 1 + \mathe^{2 \pi w} \right) \frac{\mathe^{2 \pi \mathi v s - 2 \pi \mathi w t}}{\cosh(\pi v) \cosh(\pi w)} \total v \total w \\
&\quad+ \frac{\mathi}{4} \iint_{-\infty}^\infty \ln\left[ \left( \mathe^{2 \pi v} - \mathe^{2 \pi w} \right)^2 \right] \frac{\mathe^{2 \pi \mathi v s - 2 \pi \mathi w t}}{\cosh(\pi v) \cosh(\pi w)} \total v \total w \\
&= \frac{\mathi}{2} \ln\left( m \ell \, \mathe^{\gamma_\mathrm{E}} \right) \frac{1}{\cosh(\pi s) \cosh(\pi t)} \\
&\quad- \frac{\mathi}{2} \frac{1}{\cosh(\pi t)} \int_{-\infty}^\infty \ln\left( 1 + \mathe^{2 \pi v} \right) \frac{\mathe^{2 \pi \mathi v s}}{\cosh(\pi v)} \total v \\
&\quad- \frac{\mathi}{2} \frac{1}{\cosh(\pi s)} \int_{-\infty}^\infty \ln\left( 1 + \mathe^{2 \pi w} \right) \frac{\mathe^{- 2 \pi \mathi w t}}{\cosh(\pi w)} \total w \\
&\quad+ \frac{\mathi}{4} \iint_{-\infty}^\infty \ln\left[ \left( \mathe^{2 \pi v} - \mathe^{2 \pi w} \right)^2 \right] \frac{\mathe^{2 \pi \mathi v s - 2 \pi \mathi w t}}{\cosh(\pi v) \cosh(\pi w)} \total v \total w \eqend{.}
\end{splitequation}

To solve the remaining integrals, we again use contour integration. For the two single integrals, we use the Mellin--Barnes representation of the logarithm (combining~\citeDLMFeq{15.6}{6} with~\citeDLMFeq{15.4}{1})
\begin{equation}
\ln(1+z) = \int_{0 < \Re u < 1} \frac{\Gamma(1-u) \Gamma^2(u)}{\Gamma(1+u)} z^u \frac{\total u}{2 \pi \mathi} \eqend{,}
\end{equation}
where the integration contour is a straight line parallel to the imaginary axis on which the real part $\Re u$ of the integration variable is fixed. We obtain
\begin{splitequation}
\label{eq:PerturbationOperator.MatrixElement.SingleIntegral}
I_1( s ) &\coloneqq \int_{-\infty}^\infty \ln\left( 1 + \mathe^{2 \pi v} \right) \frac{\mathe^{2 \pi \mathi v s}}{\cosh(\pi v)} \total v \\
&= \int_{0 < \Re u < \frac{1}{2}} \frac{\Gamma(1-u) \Gamma^2(u)}{\Gamma(1+u)} \int_{-\infty}^\infty \mathe^{2 \pi u v} \frac{\mathe^{2 \pi \mathi v s}}{\cosh(\pi v)} \total v \frac{\total u}{2 \pi \mathi} \\
&= \int_{0 < \Re u < \frac{1}{2}} \frac{\pi}{u \sin(\pi u)} \frac{1}{\cosh( \pi s - \mathi \pi u )} \frac{\total u}{2 \pi \mathi} \\
&= \frac{1}{2} \int_{-\infty}^\infty \frac{1}{(u_0+\mathi v) \sin(\pi u_0 + \mathi \pi v) \cosh( \pi s - \mathi \pi u_0 + \pi v )} \total v \eqend{,}
\end{splitequation}
where $u_0 \in (0,1/2)$, and where we used the reflection formula~\citeDLMFeq{5.5}{3} for the $\Gamma$ function to simplify the integrand. While the integral representation for the logarithm is valid for $0 < \Re u < 1$, we can exchange the $u$ and $v$ integrals by the Fubini--Tonelli theorem only in the range $0 < \Re u < \frac{1}{2}$, since only then the integrations converge absolutely. Closing the integration contour in the upper half plane, see figure~\ref{fig:MassiveContour-v-plane}, there are two infinite series of poles at $v^{\mathrm{s}}_k = \mathi u_0 + \mathi k$ and $v^\mathrm{c}_k = - s + \mathi (u_0+1/2) + \mathi k$, of which the one at $v^{\mathrm{s}}_0 = \mathi u_0$ is a second-order pole while all other poles are simple.
\begin{figure}
    \centering
    \includegraphics{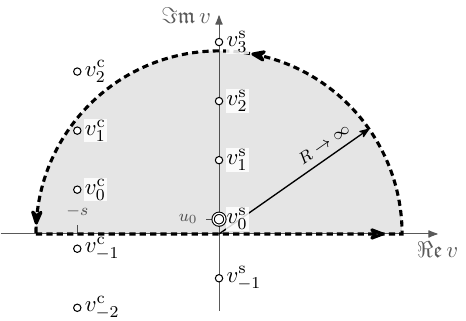}
    \caption{The integration contour of the integral~\eqref{eq:PerturbationOperator.MatrixElement.SingleIntegral} is closed in the upper half plane to give the dashed contour, where the contribution of the semicircle vanishes in the limit of infinite radius. The integrand has two sequences of simple poles $v^\mathrm{s}_k$, $v^\mathrm{c}_k$ from the $\sin$ and $\cosh$ function, respectively, while the pole at $v^\mathrm{s}_0$ has degree 2.}
    \label{fig:MassiveContour-v-plane}
\end{figure}
The contribution from the semicircle at infinity vanishes because of the exponential decay of the $\sin$ function in imaginary directions. Thus, summing the residues at these poles leads to 
\begin{splitequation}
\label{eq:PerturbationOperator.MatrixElement.SingleIntegral.Solution}
I_1( s ) &= \frac{1}{\cosh(\pi s)} \left[ \mathi \pi \tanh(\pi s) + \frac{2}{1 + 2 \mathi s} + \sum_{k=1}^\infty \left( \frac{2}{1 + 2 k + 2 \mathi s} - \frac{1}{k} \right) \right] \\
&= - \frac{1}{\cosh(\pi s)} \left[ \gamma_\mathrm{E} + \psi\left( \frac{1}{2} - \mathi s \right) \right] \eqend{,}
\end{splitequation}
where we employed the sum~\citeDLMFeq{5.7}{6} and the identities~\citeDLMFeq{5.4}{17} and~\citeDLMFeq{5.5}{2}. Here, $\gamma_\mathrm{E}$ is the Euler--Mascheroni constant, and $\psi(x) = \partial_x \ln \Gamma(x)$ is the digamma function.

Let us now turn to the double integral in~\eqref{eq:PerturbationOperator.MatrixElement.Integrating} to which we refer as $I_2( s, t )$. We change variables to $\sigma = \pi (v+w)$ and $\eta = \pi (v-w)$, such that
\begin{splitequation}
\label{eq:PerturbationOperator.MatrixElement.DoubleIntegral}
I_2( s, t ) &\coloneqq \iint_{-\infty}^\infty \ln\left[ \left( \mathe^{2 \pi v} - \mathe^{2 \pi w} \right)^2 \right] \frac{\mathe^{2 \pi \mathi v s - 2 \pi \mathi w t}}{\cosh(\pi v) \cosh(\pi w)} \total v \total w \\
&= \frac{1}{\pi^2} \iint_{-\infty}^\infty \ln\left[ 4 \mathe^{2 \sigma} \sinh^2(\eta) \right] \frac{\mathe^{\mathi (s-t) \sigma + \mathi (s+t) \eta}}{\cosh(\eta) + \cosh(\sigma)} \total \sigma \total \eta \\
&= \frac{1}{\pi^2} \int_{-\infty}^\infty \mathe^{\mathi (s+t) \eta} \left[ \ln\left( 4 \sinh^2 \eta \right) - 2 \mathi \partial_s \right] \int_{-\infty}^\infty \frac{\mathe^{\mathi (s-t) \sigma}}{\cosh(\eta) + \cosh(\sigma)} \total \sigma \total \eta \\
&= \frac{2}{\pi} \int_{-\infty}^\infty \mathe^{\mathi (s+t) \eta} \left[ \ln\left( 4 \sinh^2 \eta \right) - 2 \mathi \partial_s \right] \frac{\sin[ \eta (s-t) ]}{\sinh[ \pi (s-t) ] \sinh \eta} \total \eta \\
&= \frac{2}{\pi} \int_{-\infty}^\infty \mathe^{\mathi (s+t) \eta} \left[ \ln\left( 4 \sinh^2 \eta \right) - 2 \mathi \eta \cot[ \eta (s-t) ] + 2 \pi \mathi \coth[ \pi (s-t) ] \right] \\
&\qquad\qquad\times \frac{\sin[ \eta (s-t) ]}{\sinh[ \pi (s-t) ] \sinh \eta} \total \eta \\
&= 4 \mathi \frac{\cosh[ \pi (s-t) ]}{\sinh^2[ \pi (s-t) ]} \int_{-\infty}^\infty \mathe^{\mathi (s+t) \eta} \frac{\sin[ \eta (s-t) ]}{\sinh \eta} \total \eta \\
&\quad- \frac{4 \mathi}{\pi \sinh[ \pi (s-t) ]} \int_{-\infty}^\infty \mathe^{\mathi (s+t) \eta} \frac{\eta \cos[ \eta (s-t) ]}{\sinh \eta} \total \eta \\
&\quad+ \frac{\mathi}{\pi \sinh[ \pi (s-t) ]} \int_{-\infty}^\infty \ln\left( 4 \sinh^2 \eta \right) \frac{\mathe^{2 \mathi t \eta} - \mathe^{2 \mathi s \eta}}{\sinh \eta} \total \eta \eqend{.}
\end{splitequation}
For the first integral term we have 
\begin{splitequation}
\label{eq:PerturbationOperator.MatrixElement.DoubleIntegral.SinTerm.Simplifying}
I_2^{\sin}( s, t ) &\coloneqq \int_{-\infty}^\infty \mathe^{\mathi (s+t) \eta} \frac{\sin[ \eta (s-t) ]}{\sinh \eta} \total \eta \\
&= 2 \int_0^\infty \cos[ (s+t) \eta ] \frac{\sin[ \eta (s-t) ]}{\sinh \eta} \total \eta \\
&= \int_0^\infty \frac{\sin(2 s \eta)}{\sinh \eta} \total \eta - \int_0^\infty \frac{\sin(2 t \eta)}{\sinh \eta} \total \eta \eqend{,}
\end{splitequation}
and then we apply formula~\citeDLMFeq{4.40}{8} (analytically continued to $a \to 2 \mathi s$) to obtain
\begin{equation}
\label{eq:PerturbationOperator.MatrixElement.DoubleIntegral.SinTerm}
I_2^{\sin}( s, t ) = \frac{\pi}{2} \left[ \tanh(\pi s) - \tanh(\pi t) \right] \eqend{.}
\end{equation}
Analogously, the second integral term is 
\begin{splitequation}
\label{eq:PerturbationOperator.MatrixElement.DoubleIntegral.CosTerm}
I_2^{\cos}( s, t ) &\coloneqq \int_{-\infty}^\infty \mathe^{\mathi (s+t) \eta} \frac{\eta \cos[ \eta (s-t) ]}{\sinh \eta} \total \eta \\
&= \frac{1}{2} ( \partial_s + \partial_t ) \int_0^\infty \frac{\sin(2 s \eta) + \sin(2 t \eta)}{\sinh \eta} \total \eta \\
&= \frac{\pi}{4} ( \partial_s + \partial_t ) \left[ \tanh(\pi s) + \tanh(\pi t) \right] \\
&= \frac{\pi^2}{4} \left[ \cosh^{-2}(\pi s) + \cosh^{-2}(\pi t) \right] \eqend{.}
\end{splitequation}
For the third integral term, we introduce a small parameter $\epsilon$ 
\begin{equation}
I_2^{\ln,\epsilon}( s, t ) \coloneqq \int_{-\infty}^\infty \ln\left( 4 \sinh^2 \eta + \epsilon^2 \right) \frac{\mathe^{2 \mathi t \eta} - \mathe^{2 \mathi s \eta}}{\sinh \eta} \total \eta \eqend{,}
\end{equation}
such that the integral $I_2^{\ln,\epsilon}( s, t )$ coincides with the integral in the third term when taking the limit $\epsilon \to 0^+$, because the singularity at $\eta = 0$ is integrable. (Indeed, the fraction has a finite limit as $\eta \to 0^+$, and the logarithm for small $\eta$ has the approximate form $\ln(4 \eta^2) = 2 \ln(2 \eta)$ with indefinite integral $2 \eta [ \ln(2 \eta) - 1 ]$, which is finite as $\eta \to 0^+$). Once again, we use the Mellin--Barnes representation for the logarithm and obtain 
\begin{splitequation}
I_2^{\ln,\epsilon}( s, t ) &= 2 \ln \epsilon \int_{-\infty}^\infty \frac{\mathe^{2 \mathi t \eta} - \mathe^{2 \mathi s \eta}}{\sinh \eta} \total \eta \\
&+ \int_{0 < \Re u < \frac{1}{2}} \frac{\Gamma(1-u) \Gamma^2(u)}{\Gamma(1+u)} \left( \frac{4}{\epsilon^2} \right)^u \int_{-\infty}^\infty \left( \sinh^2 \eta \right)^u \frac{\mathe^{2 \mathi t \eta} - \mathe^{2 \mathi s \eta}}{\sinh \eta} \total \eta \frac{\total u}{2 \pi \mathi} \\
&= 2 \ln \epsilon \int_{-\infty}^\infty \frac{\mathe^{2 \mathi t \eta} - \mathe^{2 \mathi s \eta}}{\sinh \eta} \total \eta \\
&\quad+ 2 \pi \mathi \int_{0 < \Re u < \frac{1}{2}} \frac{1}{u \sin(\pi u)} \left( \frac{4}{\epsilon^2} \right)^u \int_0^\infty \left[ \sin(2 t \eta) - \sin(2 s \eta) \right] \sinh^{2u-1} \eta \total \eta \frac{\total u}{2 \pi \mathi} \eqend{.}
\end{splitequation}
Again, we restricted the integration to $0 < \Re u < \frac{1}{2}$ to be able to use Fubini--Tonelli and interchange the integrals (since only then the $\eta$ integral is absolutely convergent). To perform the integral over $\eta$ in the first term, we use the same computation as for the result~\eqref{eq:PerturbationOperator.MatrixElement.DoubleIntegral.SinTerm}, 
\begin{equation}
\int_{-\infty}^\infty \frac{\mathe^{2 \mathi t \eta}}{\sinh \eta} \total \eta = 2 \mathi \int_0^\infty \frac{\sin(2 t \eta)}{\sinh \eta} \total \eta = \mathi \pi \tanh( \pi s ) \eqend{.}
\end{equation}
For the other integral over $\eta$, we change variables to $\eta = 2 \artanh(x)$ and obtain 
\begin{splitequation}
\int_0^\infty \mathe^{2 \mathi t \eta} \sinh^{2u-1} \eta \total \eta &= 4^u \int_0^1 \frac{x^{2 u - 1} (1-x)^{- 2 u - 2 \mathi t}}{(1+x)^{2 u - 2 \mathi t}} \total x \\
&= 4^u \frac{\Gamma(2 u) \Gamma(1 - 2 \mathi t - 2 u)}{\Gamma(1 - 2 \mathi t)} \hypergeom{2}{1}\left( 2 u - 2 \mathi t, 2 u; 1 - 2 \mathi t; - 1 \right) \\
&= 4^u \frac{\Gamma(2 u) \Gamma(1 - 2 \mathi t - 2 u) \Gamma(u - \mathi t + 1)}{\Gamma(2 u - 2 \mathi t + 1) \Gamma(-u - \mathi t + 1)} \\
&= 4^{-u} \Gamma\left( \frac{1}{2} - u - \mathi t \right) \Gamma\left( \frac{1}{2} - u + \mathi t \right) \frac{\cos[ \pi (u - \mathi t) ]}{\sin(2 \pi u) \Gamma(1-2u)} \eqend{,}
\end{splitequation}
where we used the integral representation~\citeDLMFeq{15.6}{1} of the Gauss hypergeometric function $\hypergeom{2}{1}$ (there written in terms of the Olver hypergeometic function~\citeDLMFeq{15.2}{2}). To obtain the last two lines, we use the special value~\citeDLMFeq{15.4}{26} and the well-known relations for the $\Gamma$ function~\cite[\href{https://dlmf.nist.gov/5.5}{Sec.~5.5}]{DLMF}. (All restrictions on parameters are fulfilled for $0 < u < \frac{1}{2}$, and the final result holds for $0 < \Re u < \frac{1}{2}$ by analyticity.) Finally, we need to integrate over $u$ and take the limit $\epsilon \to 0^+$. In the limit, the first term is logarithmically divergent, and the second term is divergent as well since $\Re u > 0$. To show that their sum is actually finite, we need to move the integration contour to have $\Re u < 0$. This we do by lifting it over the pole at $u = 0$, see figure~\ref{fig:MassiveContour-u-plane}, 
\begin{figure}
    \centering
    \includegraphics{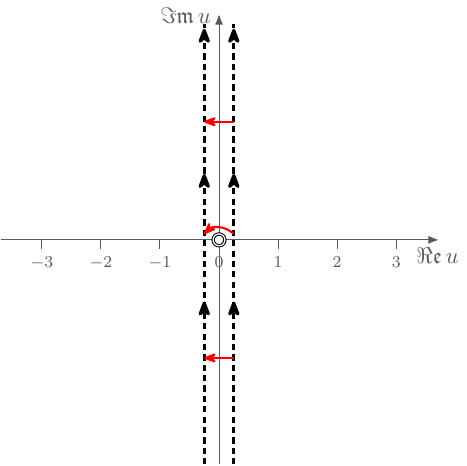}
    \caption{When moving the integration contour (dashed) over the double pole at $u = 0$ (red arrows), we have to include the residue of the pole.}
    \label{fig:MassiveContour-u-plane}
\end{figure}
according to the schematic
\begin{equation}
\int_{\Re u > 0} f(u) \frac{\total u}{2 \pi \mathi} = \Res_{u = 0} f(u) + \int_{\Re u < 0} f(u) \frac{\total u}{2 \pi \mathi} \eqend{.}
\end{equation}
The remaining integral with $\Re u < 0$ is uniformly bounded by a term $\epsilon^{- 2 \Re u}$, and vanishes as $\epsilon \to 0^+$, such that only the residue term remains: 
\begin{splitequation}
I_2^{\ln,\epsilon}( s, t ) &= 2 \pi \mathi \left[ \tanh(\pi t) - \tanh(\pi s) \right] \ln \epsilon \\
&\quad+ \Res_{u = 0} \frac{2 \pi \mathi \epsilon^{-2u}}{u \sin(2 \pi u) \Gamma(1-2 u)}
\Biggl[ \sinh(\pi t) \Gamma\left( \frac{1}{2} - u - \mathi t \right) \Gamma\left( \frac{1}{2} - u + \mathi t \right) \\
&\hspace{12em}- \sinh(\pi s) \Gamma\left( \frac{1}{2} - u - \mathi s \right) \Gamma\left( \frac{1}{2} - u + \mathi s \right) \Biggr] \eqend{.}
\end{splitequation}
Computing the residue, the divergence in $\ln \epsilon$ cancels since the pole at $u = 0$ is of second order, such that a logarithm appears. We introduce a shorter notation for the sum of two digamma functions and obtain the result
\begin{equations}
\tilde{\psi}(s) &\coloneqq \psi\left( \frac{1}{2} + \mathi s \right) + \psi\left( \frac{1}{2} - \mathi s \right) \eqend{,} \\
\lim_{\epsilon \to 0} I_2^{\ln,\epsilon}( s, t ) &= \mathi \pi \left[ 2 \gamma_\mathrm{E} + \tilde{\psi}(s) \right] \tanh(\pi s) - \mathi \pi \left[ 2 \gamma_\mathrm{E} + \tilde{\psi}(t) \right] \tanh(\pi t) \eqend{.}
\end{equations}
Collecting all integration results, we have $K(s,t)$ as given in~\eqref{eq:PerturbationOperator.MatrixElement.Solution}.

Finally, we compute the integral kernel of the modular operator~\eqref{eq:ModularHamiltonian.Interval.Massive.Integral}. To perform the integral, it is useful to pass to sum and difference variables $\sigma = (s+t)/2$ and $\eta = (s-t)/2$. For a shorter notation of the ratio that appears repeatedly, let us define
\begin{equation}
r(x,y) \coloneqq \sqrt{\frac{\ell - x}{\ell + x} \frac{\ell - y}{\ell + y}} > 0 \eqend{.}
\end{equation}
There are three integral contributions to the modular Hamiltonian 
\begin{equation}
H^{(1)}_{12}(x,y) = \frac{2 \mathi m \ell^2}{\sqrt{ (\ell^2-x^2) (\ell^2-y^2) }} \left[ 4 \ln\left( m \ell \mathe^{2\gamma_\mathrm{E}} \right) \vartheta_1(x,y) + \vartheta_2(x,y) + \vartheta_3(x,y) \right] \eqend{,}
\end{equation}
which are
\begin{equations}
\vartheta_1(x,y) &\coloneqq \int r^{2 \mathi \sigma}(x,y) \total \sigma \int r^{2 \mathi \eta}(x,-y) \frac{\eta}{\sinh(2 \pi \eta)} \total \eta \eqend{,} \\
\vartheta_2(x,y) &\coloneqq \iint r^{2 \mathi \sigma}(x,y) r^{2 \mathi \eta}(x,-y) \frac{\eta}{\sinh(2 \pi \eta)} \left[ \tilde{\psi}( \sigma - \eta ) + \tilde{\psi}( \sigma + \eta ) \right] \total \sigma \total \eta \eqend{,} \\
\vartheta_3(x,y) &\coloneqq \iint r^{2 \mathi \sigma}(x,y) r^{2 \mathi \eta}(x,-y) \frac{\eta}{\sinh(2 \pi \sigma)} \left[ \tilde{\psi}( \sigma - \eta ) - \tilde{\psi}( \sigma + \eta ) \right] \total \sigma \total \eta \eqend{.}
\end{equations}
Note that the integrals only converge in a distributional sense once again. We use the integral representation~\citeDLMFeq{5.9}{16} for the digamma function
\begin{equation}
\psi(z) = - \gamma_\mathrm{E} - \lim_{\epsilon \to 0^+} \left[ \int_0^{1-\epsilon} \frac{t^{z-1}}{1-t} \total t + \ln \epsilon \right] \eqend{.}
\end{equation}
For $\vartheta_{2,3}$, we pull the limits out of the converging integrals, and write 
\begin{equation}
\vartheta_{i}(x,y) = \lim_{\epsilon \to 0^+} \vartheta_{i,\epsilon}(x,y) \eqend{,} \quad i \in \{2,3\} \eqend{.}
\end{equation}
Then all integrals are of the form (for some $z > 0$)
\begin{equations}
\int_{-\infty}^{\infty} z^{2 \mathi s} \total s &= 2 \pi \, \delta(2 \ln z) \eqend{,} \\
\int_{-\infty}^{\infty} z^{2 \mathi s} s \total s &= - 2 \pi \mathi \, \delta'(2 \ln z) \eqend{,} \\
\int_{-\infty}^{\infty} z^{2 \mathi s} \frac{1}{\sinh(2 \pi s)} \total s &= - \frac{\mathi}{2} \frac{1-z}{1+z} \eqend{,} \\
\int_{-\infty}^{\infty} z^{2 \mathi s} \frac{s}{\sinh(2 \pi s)} \total s &= \frac{z}{2 (1+z)^2} \eqend{.}
\end{equations}
Since the variable $z$ depends on $x$ and $y$, in the last step, we use the composition formula for the Dirac $\delta$ distribution~\eqref{eq:Delta.Composition} and its derivative~\eqref{eq:DeltaDeriv.Composition}. Applying these steps to the integrals, we obtain 
\begin{equation}
\vartheta_1(x,y) = \pi \frac{r(x,-y)}{\left[ 1 + r(x,-y) \right]^2} \delta\left[ 2 \ln r(x,y) \right] = \frac{\pi}{8 \ell^3} ( \ell^2 - x^2 )^2 \delta(x+y) \eqend{,}
\end{equation}
\begin{splitequation}
\vartheta_{2,\epsilon}(x,y) &= - 4 ( \gamma_\mathrm{E} + \ln \epsilon ) \vartheta_1(x,y) \\
&\quad- \pi \int_0^{1-\epsilon} \left[ \frac{\sqrt{t} r(x,-y)}{[ 1 + \sqrt{t} r(x,-y) ]^2} + \frac{\sqrt{t} r(x,-y)}{[ \sqrt{t} + r(x,-y) ]^2} \right] \\
&\qquad\quad\times \Big[ \delta\left[ \ln t + 2 \ln r(x,y) \right] + \delta\left[ \ln t - 2 \ln r(x,y) \right] \Big] \frac{t^{-\frac{1}{2}}}{1-t} \total t \\
&= - 4 ( \gamma_\mathrm{E} + \ln \epsilon ) \vartheta_1(x,y) \\
&\quad- \pi \frac{2 \ell^2 - x^2 - y^2}{4 \ell^2} \int_0^{1-\epsilon} \left[ \delta\left[ t - r^2(x,y) \right] + \delta\left[ t - r^2(-x,-y) \right] \right] \frac{\sqrt{t}}{1-t} \total t \eqend{,}
\end{splitequation}
and
\begin{splitequation}
\vartheta_{3,\epsilon}(x,y) &= \pi \int_0^{1-\epsilon} \left[ \frac{\sqrt{t} - r(x,y)}{\sqrt{t} + r(x,y)} - \frac{1 - \sqrt{t} r(x,y)}{1 + \sqrt{t} r(x,y)} \right] \\
&\qquad\quad\times \Big[ \delta'\left[ \ln t + 2 \ln r(x,-y) \right] + \delta'\left[ \ln t - 2 \ln r(x,-y) \right] \Big] \frac{t^{-\frac{1}{2}}}{1-t} \total t \\
&= \frac{\pi (x-y) \sqrt{ (\ell^2-x^2) (\ell^2-y^2) }}{8 \ell^3} \int_0^{1-\epsilon} \left[ \delta\left[ t - r^2(x,-y) \right] - \delta\left[ t - r^2(-x,y) \right] \right] \total t \\
&\quad- \frac{\pi \sqrt{ (\ell^2-x^2) (\ell^2-y^2) }}{2 \ell^2} \int_0^{1-\epsilon} \Big[ r^2(x,-y) \, \delta'\left[ t - r^2(x,-y) \right] \\
&\hspace{15em}+ r^2(-x,y) \, \delta'\left[ t - r^2(-x,y) \right] \Big] \total t \eqend{,} \raisetag{1.8em}
\end{splitequation}
where we also used~\eqref{eq:DeltaDeriv.Product} to simplify the integrands, and combined some terms. To perform the integrals over $t$, we introduce a Heaviside $\Theta$ distribution as 
\begin{equation}
\int_0^{1-\epsilon} f(t) \total t = \int_0^\infty \Theta(1-\epsilon-t) f(t) \total t \eqend{,}
\end{equation}
and then use the Dirac $\delta$ distributions to perform the $t$ integrals explicitly. Collecting all integral terms $\vartheta_i$, we obtain
\begin{splitequation}
\label{eq:MassiveModularHamiltonian.Correction.Appendix}
H^{(1)}_{12}(x,y) &= 2 \pi \mathi m \ell \lim_{\epsilon \to 0^+} \Bigg[ \ln\left( \frac{m \ell}{\epsilon} \mathe^{\gamma_\mathrm{E}} \right) \frac{\ell^2-x^2}{2 \ell^2} \delta(x+y) \\
&\qquad- \frac{2 \ell^2 - x^2 - y^2}{8 \ell^2 (x+y)} \left[ \Theta\left( \frac{2 \ell (x+y)}{(\ell+x) (\ell+y)} - \epsilon \right) - \Theta\left( - \frac{2 \ell (x+y)}{(\ell-x) (\ell-y)} - \epsilon \right) \right] \\
&\qquad+ \frac{x-y}{8 \ell^2} \left[ \Theta\left( \frac{2 \ell (x-y)}{(\ell+x) (\ell-y)} - \epsilon \right) - \Theta\left( - \frac{2 \ell (x-y)}{(\ell-x) (\ell+y)} - \epsilon \right) \right] \\
&\qquad- \frac{r^2(x,-y)}{2 \ell} \delta\left( \frac{2 \ell (x-y)}{(\ell+x) (\ell-y)} - \epsilon \right) - \frac{r^2(-x,y)}{2 \ell} \delta\left( - \frac{2 \ell (x-y)}{(\ell-x) (\ell+y)} - \epsilon \right) \Bigg] \eqend{.}
\end{splitequation}
Finally, we have to take the limit $\epsilon \to 0^+$. In the Dirac $\delta$ distributions, we simply set $\epsilon = 0$ and use the composition formula~\eqref{eq:Delta.Composition}, since the resulting expression is a well-defined distribution.

For the Heaviside $\Theta$ distributions in the second line, this is not so simple, since the term that they multiply potentially diverges. To compute the limit, we change to sum and difference variables $\sigma = (x+y)/(2\ell)$ and $\eta = (x-y)/(2\ell)$ and obtain for $x,y \in [-\ell,\ell]$
\begin{splitequation}
\frac{2 \ell^2 - x^2 - y^2}{\ell (x+y)} \Theta\left( \frac{2 \ell (x+y)}{(\ell+x) (\ell+y)} - \epsilon \right) &= \frac{2 \ell^2 - x^2 - y^2}{\ell (x+y)} \Theta\Big( 2 \ell (x+y) - (\ell+x) (\ell+y) \epsilon \Big) \\
&= \frac{1 - \sigma^2 - \eta^2}{\sigma} \Theta\left( 4 \sigma - \left[ (1+\sigma)^2 - \eta^2 \right] \epsilon \right) \eqend{.}
\end{splitequation}
Using the result~\eqref{eq:PfThetaCoordinate.Limit.Appendix} with the replacements $x \to \sigma$, $a \to \eta$ and $\epsilon \to \epsilon/4$, we can take the limit $\epsilon \to 0^+$ and obtain
\begin{splitequation}
&\lim_{\epsilon \to 0^+} \left[ \frac{1 - \sigma^2 - \eta^2}{\sigma} \Theta\left( 4 \sigma - \left[ (1+\sigma)^2 - \eta^2 \right] \epsilon \right) + (1-\eta^2) \ln\left[ \mu \frac{1-\eta^2}{4} \epsilon \, \mathe^{-\gamma_\mathrm{E}} \right] \delta(\sigma) \right] \\
&\quad= (1-\eta^2) \pvalue_\mu \frac{\Theta(\sigma)}{\sigma} - \sigma \, \Theta(\sigma) \eqend{.}
\end{splitequation}
Switching back to $x$ and $y$ and using the almost homogeneous scaling~\eqref{eq:PfTheta.Scaling.Appendix}, we thus obtain
\begin{splitequation}
&\lim_{\epsilon \to 0^+} \Bigg[ \frac{2 \ell^2 - x^2 - y^2}{\ell (x+y)} \Theta\left( \frac{2 \ell (x+y)}{(\ell+x) (\ell+y)} - \epsilon \right) \\
&\hspace{6em}+ \frac{4 \ell^2 - (x-y)^2}{2 \ell} \ln\left( \mu \frac{4 \ell^2 - (x-y)^2}{8 \ell} \epsilon \right) \delta(x+y) \Bigg] \\
&\quad= \frac{4 \ell^2 - (x-y)^2}{2 \ell} \pvalue_\mu \frac{\Theta(x+y)}{x+y} + 2 \frac{\ell^2 - x^2}{\ell} \gamma_\mathrm{E} \, \delta(x+y) - \frac{x+y}{2 \ell} \Theta(x+y) \eqend{.}
\end{splitequation}

The analogous computation, using the limit~\eqref{eq:PfThetaCoordinate.Limit2.Appendix}, establishes that
\begin{splitequation}
&\lim_{\epsilon \to 0^+} \Bigg[ \frac{2 \ell^2 - x^2 - y^2}{\ell (x+y)} \Theta\left( - \frac{2 \ell (x+y)}{(\ell-x) (\ell-y)} - \epsilon \right) \\
&\hspace{6em}- \frac{4 \ell^2 - (x-y)^2}{2 \ell} \ln\left( \mu \frac{4 \ell^2 - (x-y)^2}{8 \ell} \epsilon \right) \delta(x+y) \Bigg] \\
&\quad= \frac{4 \ell^2 - (x-y)^2}{2 \ell} \pvalue_\mu \frac{\Theta(-(x+y))}{x+y} - \frac{4 \ell^2 - (x-y)^2}{2 \ell} \gamma_\mathrm{E} \, \delta(x+y) - \frac{x+y}{2 \ell} \Theta(-(x+y)) \eqend{,}
\end{splitequation}
such that our result~\eqref{eq:MassiveModularHamiltonian.Correction.Appendix} reduces to
\begin{splitequation}
H^{(1)}_{12}(x,y) &= 2 \pi \mathi m \ell \Bigg[ \ln\left( m \ell \frac{\ell^2 - x^2}{2 \ell} \mu \right) \frac{\ell^2 - x^2}{2 \ell^2} \delta(x+y) + \frac{\abs{x-y}}{8 \ell^2} - \frac{\ell^2 - x^2}{2 \ell^2} \delta(x-y) \\
&\quad- \frac{4 \ell^2 - (x-y)^2}{16 \ell^2} \pvalue_\mu \frac{\Theta(x+y)}{x+y} + \frac{4 \ell^2 - (x-y)^2}{16 \ell^2} \pvalue_\mu \frac{\Theta(-(x+y))}{x+y} + \frac{\abs{x+y}}{16 \ell^2} \Bigg] \eqend{.} \raisetag{3em}
\end{splitequation}
Finally, using the relation~\eqref{eq:PfTheta.Relation.Appendix} and the fact that
\begin{equation}
(x+y)^2 \pvalue_\mu \frac{1}{\abs{x+y}} = \abs{x+y} \eqend{,}
\end{equation}
which follows straightforwardly from the definition~\eqref{eq:ModifiedPrincipalValueDistribution}, we obtain the result~\eqref{eq:FirstOrderMassCorrection.Result}.

\bibliography{references}

\providecommand{\href}[2]{#2}\begingroup\raggedright\begin{thebibliography}{10}

\bibitem{Takesaki:1970}
M.~Takesaki, \emph{{Tomita's Theory of Modular Hilbert Algebras and its
  Applications}}, vol.~1 of \emph{Lecture Notes in Mathematics}, Springer
  Berlin (1970), \href{https://doi.org/10.1007/BFb0065832}{10.1007/BFb0065832}.

\bibitem{Araki:1970}
H.~Araki, \emph{{On Quasifree States of CAR and Bogoliubov Automorphisms}},
  \href{https://doi.org/10.2977/prims/1195193913}{\emph{Publ. Res. Inst. Math.
  Sci. Kyoto} {\bfseries 6} (1970) 385}.

\bibitem{Fredenhagen:1984}
K.~Fredenhagen, \emph{{On the modular structure of local algebras of
  observables}}, \href{https://doi.org/10.1007/BF01206179}{\emph{Commun. Math.
  Phys.} {\bfseries 97} (1985) 79}.

\bibitem{Borchers:2000}
H.J.~Borchers, \emph{{On revolutionizing quantum field theory with Tomita's
  modular theory}}, \href{https://doi.org/10.1063/1.533323}{\emph{J. Math.
  Phys.} {\bfseries 41} (2000) 3604}.

\bibitem{Witten:2018}
E.~Witten, \emph{{Entanglement properties of quantum field theory}},
  \href{https://doi.org/10.1103/RevModPhys.90.045003}{\emph{Rev. Mod. Phys.}
  {\bfseries 90} (2018) 045003}
  [\href{https://arxiv.org/abs/1803.04993}{{\ttfamily 1803.04993}}].

\bibitem{Longo:2022}
R.~Longo, \emph{{Modular Structure of the Weyl Algebra}},
  \href{https://doi.org/10.1007/s00220-023-04674-0}{\emph{Commun. Math. Phys.}
  {\bfseries 392} (2022) 145}
  [\href{https://arxiv.org/abs/2111.11266}{{\ttfamily 2111.11266}}].
\newblock Correction: \href{10.1007/s00220-023-04674-0}{Commun. Math. Phys.
  \textbf{401} (2023) 1061}.

\bibitem{CasiniHuerta:2009entropy}
H.~Casini and M.~Huerta, \emph{{Entanglement entropy in free quantum field
  theory}}, \href{https://doi.org/10.1088/1751-8113/42/50/504007}{\emph{J.
  Phys. A} {\bfseries 42} (2009) 504007}
  [\href{https://arxiv.org/abs/0905.2562}{{\ttfamily 0905.2562}}].

\bibitem{Hollands:2018}
S.~Hollands, \emph{{Relative entropy close to the edge}},
  \href{https://doi.org/10.1007/s00023-019-00805-3}{\emph{Ann. H. Poincar\'e}
  {\bfseries 20} (2019) 2353}
  [\href{https://arxiv.org/abs/1805.10006}{{\ttfamily 1805.10006}}].

\bibitem{AbtErdmenger:2018}
R.~Abt and J.~Erdmenger, \emph{{Properties of Modular Hamiltonians on
  Entanglement Plateaux}},
  \href{https://doi.org/10.1007/JHEP11(2018)002}{\emph{JHEP} {\bfseries 11}
  (2018) 002} [\href{https://arxiv.org/abs/1809.03516}{{\ttfamily
  1809.03516}}].

\bibitem{CasiniGrilloPontello:2019}
H.~Casini, S.~Grillo and D.~Pontello, \emph{{Relative entropy for coherent
  states from Araki formula}},
  \href{https://doi.org/10.1103/PhysRevD.99.125020}{\emph{Phys. Rev. D}
  {\bfseries 99} (2019) 125020}
  [\href{https://arxiv.org/abs/1903.00109}{{\ttfamily 1903.00109}}].

\bibitem{Longo:2020}
F.~Ciolli, R.~Longo and G.~Ruzzi, \emph{{The Information in a Wave}},
  \href{https://doi.org/10.1007/s00220-019-03593-3}{\emph{Commun. Math. Phys.}
  {\bfseries 379} (2020) 979}.

\bibitem{ErdmengerFriesReyesSimon:2020}
J.~Erdmenger, P.~Fries, I.A.~Reyes and C.P.~Simon, \emph{{Resolving modular
  flow: a toolkit for free fermions}},
  \href{https://doi.org/10.1007/JHEP12(2020)126}{\emph{JHEP} {\bfseries 12}
  (2020) 126} [\href{https://arxiv.org/abs/2008.07532}{{\ttfamily
  2008.07532}}].

\bibitem{DAngelo:2021}
E.~D'Angelo, \emph{{Entropy for spherically symmetric, dynamical black holes
  from the relative entropy between coherent states of a scalar quantum
  field}}, \href{https://doi.org/10.1088/1361-6382/ac13b8}{\emph{Class. Quant.
  Grav.} {\bfseries 38} (2021) 175001}
  [\href{https://arxiv.org/abs/2105.04303}{{\ttfamily 2105.04303}}].

\bibitem{Bostelmann:2022}
H.~Bostelmann, D.~Cadamuro and S.~Del~Vecchio, \emph{{Relative Entropy of
  Coherent States on General CCR Algebras}},
  \href{https://doi.org/10.1007/s00220-019-03593-3}{\emph{Commun. Math. Phys.}
  {\bfseries 389} (2022) 661}.

\bibitem{MorinelliTanimotoWegener:2022}
V.~Morinelli, Y.~Tanimoto and B.~Wegener, \emph{{Modular Operator for Null
  Plane Algebras in Free Fields}},
  \href{https://doi.org/10.1007/s00220-022-04432-8}{\emph{Commun. Math. Phys.}
  {\bfseries 395} (2022) 331}
  [\href{https://arxiv.org/abs/2107.00039}{{\ttfamily 2107.00039}}].

\bibitem{CiolliLongoRanalloRuzzi:2022}
F.~Ciolli, R.~Longo, A.~Ranallo and G.~Ruzzi, \emph{{Relative entropy and
  curved spacetimes}},
  \href{https://doi.org/10.1016/j.geomphys.2021.104416}{\emph{J. Geom. Phys.}
  {\bfseries 172} (2022) 104416}
  [\href{https://arxiv.org/abs/2107.06787}{{\ttfamily 2107.06787}}].

\bibitem{GarbarzPalau:2022}
A.~Garbarz and G.~Palau, \emph{{Relative entropy of an interval for a massless
  boson at finite temperature}},
  \href{https://doi.org/10.1103/PhysRevD.107.125016}{\emph{Phys. Rev. D}
  {\bfseries 107} (2023) 125016}
  [\href{https://arxiv.org/abs/2209.00035}{{\ttfamily 2209.00035}}].

\bibitem{HuertaVanDerVelde:2023}
M.~Huerta and G.~van~der Velde, \emph{{Modular Hamiltonian of the scalar in the
  semi infinite line: dimensional reduction for spherically symmetric
  regions}}, \href{https://doi.org/10.1007/JHEP06(2023)097}{\emph{JHEP}
  {\bfseries 06} (2023) 097}
  [\href{https://arxiv.org/abs/2301.00294}{{\ttfamily 2301.00294}}].

\bibitem{GalandaMuchVerch:2023}
S.~Galanda, A.~Much and R.~Verch, \emph{{Relative Entropy of Fermion Excitation
  States on the CAR Algebra}},
  \href{https://doi.org/10.1007/s11040-023-09464-7}{\emph{Math. Phys. Anal.
  Geom.} {\bfseries 26} (2023) 21}
  [\href{https://arxiv.org/abs/2305.02788}{{\ttfamily 2305.02788}}].

\bibitem{FroebMuchPapadopoulos:2023}
M.B.~Fr{\"o}b, A.~Much and K.~Papadopoulos, \emph{{Relative entropy in de
  Sitter spacetime is a Noether charge}},
  \href{https://doi.org/10.1103/PhysRevD.108.105004}{\emph{Phys. Rev. D}
  {\bfseries 108} (2023) 105004}
  [\href{https://arxiv.org/abs/2310.12185}{{\ttfamily 2310.12185}}].

\bibitem{DAngelo:2023}
E.~D'Angelo, M.B.~Fr{\"o}b, S.~Galanda, P.~Meda, A.~Much and K.~Papadopoulos,
  \emph{{Entropy-area law and temperature of de Sitter horizons from modular
  theory}}, \href{https://doi.org/10.1093/ptep/ptae003}{\emph{PTEP} {\bfseries
  2024} (2024) 021A01} [\href{https://arxiv.org/abs/2311.13990}{{\ttfamily
  2311.13990}}].

\bibitem{BisognanoWichmann:1976}
J.J.~Bisognano and E.H.~Wichmann, \emph{{On the duality condition for quantum
  fields}}, \href{https://doi.org/10.1063/1.522898}{\emph{J. Math. Phys.}
  {\bfseries 17} (1976) 303}.

\bibitem{HislopLongo:1981}
P.D.~Hislop and R.~Longo, \emph{{Modular structure of the local algebras
  associated with the free massless scalar field theory}},
  \href{https://doi.org/10.1007/BF01208372}{\emph{Commun. Math. Phys.}
  {\bfseries 84} (1982) 71}.

\bibitem{Froeb:2023}
M.B.~Fr\"ob, \emph{{Modular Hamiltonian for de Sitter diamonds}},
  \href{https://arxiv.org/abs/2308.14797}{{\ttfamily 2308.14797}}.

\bibitem{CasiniHuerta:2009}
H.~Casini and M.~Huerta, \emph{{Reduced density matrix and internal dynamics
  for multicomponent regions}},
  \href{https://doi.org/10.1088/0264-9381/26/18/185005}{\emph{Class. Quant.
  Grav.} {\bfseries 26} (2009) 185005}
  [\href{https://arxiv.org/abs/0903.5284}{{\ttfamily 0903.5284}}].

\bibitem{Rehren:2010}
R.~Longo, P.~Martinetti and K.-H.~Rehren, \emph{{Geometric modular action for
  disjoint intervals and boundary conformal field theory}},
  \href{https://doi.org/10.1142/S0129055X10003977}{\emph{Rev. Math. Phys.}
  {\bfseries 22} (2010) 331}.

\bibitem{Hollands:2021}
S.~Hollands, \emph{{On the Modular Operator of Multi-component Regions in
  Chiral CFT}}, \href{https://doi.org/10.1007/s00220-021-04054-6}{\emph{Commun.
  Math. Phys.} {\bfseries 384} (2021) 785}.

\bibitem{KlichVamanWong:2017}
I.~Klich, D.~Vaman and G.~Wong, \emph{{Entanglement Hamiltonians for Chiral
  Fermions with Zero Modes}},
  \href{https://doi.org/10.1103/PhysRevLett.119.120401}{\emph{Phys. Rev. Lett.}
  {\bfseries 119} (2017) 120401}
  [\href{https://arxiv.org/abs/1501.00482}{{\ttfamily 1501.00482}}].

\bibitem{BlancoPerezNadal:2019}
D.~Blanco and G.~P\'erez-Nadal, \emph{{Modular Hamiltonian of a chiral fermion
  on the torus}},
  \href{https://doi.org/10.1103/PhysRevD.100.025003}{\emph{Phys. Rev. D}
  {\bfseries 100} (2019) 025003}
  [\href{https://arxiv.org/abs/1905.05210}{{\ttfamily 1905.05210}}].

\bibitem{FriesReyes:2019}
P.~Fries and I.A.~Reyes, \emph{{Entanglement Spectrum of Chiral Fermions on the
  Torus}}, \href{https://doi.org/10.1103/PhysRevLett.123.211603}{\emph{Phys.
  Rev. Lett.} {\bfseries 123} (2019) 211603}
  [\href{https://arxiv.org/abs/1905.05768}{{\ttfamily 1905.05768}}].

\bibitem{BlancoGarbarzPerezNadal:2019}
D.~Blanco, A.~Garbarz and G.~P\'erez-Nadal, \emph{{Entanglement of a chiral
  fermion on the torus}},
  \href{https://doi.org/10.1007/JHEP09(2019)076}{\emph{JHEP} {\bfseries 09}
  (2019) 076} [\href{https://arxiv.org/abs/1906.07057}{{\ttfamily
  1906.07057}}].

\bibitem{AriasBlancoCasiniHuerta:2017}
R.~Arias, D.~Blanco, H.~Casini and M.~Huerta, \emph{{Local temperatures and
  local terms in modular Hamiltonians}},
  \href{https://doi.org/10.1103/PhysRevD.95.065005}{\emph{Phys. Rev. D}
  {\bfseries 95} (2017) 065005}
  [\href{https://arxiv.org/abs/1611.08517}{{\ttfamily 1611.08517}}].

\bibitem{EislerDiGiulioTonniPeschel:2020}
V.~Eisler, G.~Di~Giulio, E.~Tonni and I.~Peschel, \emph{{Entanglement
  Hamiltonians for non-critical quantum chains}},
  \href{https://doi.org/10.1088/1742-5468/abb4da}{\emph{J. Stat. Mech.}
  {\bfseries 2020} (2020) 103102}
  [\href{https://arxiv.org/abs/2007.01804}{{\ttfamily 2007.01804}}].

\bibitem{JaverzatTonni:2022}
N.~Javerzat and E.~Tonni, \emph{{On the continuum limit of the entanglement
  Hamiltonian of a sphere for the free massless scalar field}},
  \href{https://doi.org/10.1007/JHEP02(2022)086}{\emph{JHEP} {\bfseries 02}
  (2022) 086} [\href{https://arxiv.org/abs/2111.05154}{{\ttfamily
  2111.05154}}].

\bibitem{BostelmannCadamuroMinz:2023}
H.~Bostelmann, D.~Cadamuro and C.~Minz, \emph{{On the Mass Dependence of the
  Modular Operator for a Double Cone}},
  \href{https://doi.org/10.1007/s00023-023-01311-3}{\emph{Annales Henri
  Poincar\'e} {\bfseries 24} (2023) 3031}
  [\href{https://arxiv.org/abs/2209.04681}{{\ttfamily 2209.04681}}].

\bibitem{Peschel:2003}
I.~Peschel, \emph{{Calculation of reduced density matrices from correlation
  functions}}, \href{https://doi.org/10.1088/0305-4470/36/14/101}{\emph{J.
  Phys. A: Math. Gen.} {\bfseries 36} (2003) L205}.

\bibitem{FiglioliniGuido:1989}
F.~Figliolini and D.~Guido, \emph{{The Tomita operator for the free scalar
  field}},
  \href{http://www.numdam.org/item/AIHPA_1989__51_4_419_0/}{\emph{Annales
  I.H.P. Phys. théor.} {\bfseries 51} (1989) 419}.

\bibitem{FiglioliniGuido:1994}
F.~Figliolini and D.~Guido, \emph{{On the type of second quantization
  factors}}, \href{http://www.jstor.org/stable/24714460}{\emph{J. Operat.
  Theor.} {\bfseries 31} (1994) 229}.

\bibitem{Foit:1983}
J.J.~Foit, \emph{{Abstract Twisted Duality for Quantum Free Fermi Fields}},
  \href{https://doi.org/10.2977/prims/1195182448}{\emph{Publ. Res. Inst. Math.
  Sci. Kyoto} {\bfseries 19} (1983) 729}.

\bibitem{LongoMorsella:2023}
R.~Longo and G.~Morsella, \emph{{The Massless Modular Hamiltonian}},
  \href{https://doi.org/10.1007/s00220-022-04617-1}{\emph{Commun. Math. Phys.}
  {\bfseries 400} (2023) 1181}
  [\href{https://arxiv.org/abs/2012.00565}{{\ttfamily 2012.00565}}].

\bibitem{SegalGoodman:1965}
I.E.~Segal and R.W.~Goodman, \emph{{Anti-Locality of Certain Lorentz-Invariant
  Operators}}, \href{https://doi.org/10.1512/iumj.1965.14.14041}{\emph{J. Math.
  Mech.} {\bfseries 14} (1965) 629}.

\bibitem{Masuda:1972}
K.~Masuda, \emph{{Anti-Locality of the One-half Power of Elliptic Differential
  Operators}}, \href{https://doi.org/10.2977/PRIMS/1195193233}{\emph{Publ. Res.
  Inst. Math. Sci. Kyoto} {\bfseries 8} (1972) 207}.

\bibitem{Murata:1973}
M.~Murata, \emph{{Anti-locality of certain functions of the Laplace operator}},
  \href{https://doi.org/10.2969/jmsj/02540556}{\emph{J. Math. Soc. Japan}
  {\bfseries 25} (1973) 556}.

\bibitem{FreedmanVanProeyen:2012}
D.Z.~Freedman and A.~Van~Proeyen, \emph{{Supergravity}}, Cambridge Univ. Press,
  Cambridge, UK (2012).

\bibitem{EncyclopediaMathRHS}
{European Mathematical Society}, \emph{{Rigged Hilbert space}},  in
  \emph{{Encyclopedia of Mathematics}}, EMS Press (2023),
  \href{http://encyclopediaofmath.org/index.php?title=Rigged\_Hilbert\_space}{http://encyclopediaofmath.org/index.php?title=Rigged\_Hilbert\_space}.

\bibitem{KoppelmanPincus:1959}
W.~Koppelman and J.D.~Pincus, \emph{{Spectral representations for finite
  Hilbert transformations}},
  \href{https://doi.org/10.1007/BF01181411}{\emph{Mathematische Zeitschrift}
  {\bfseries 71} (1959) 399}.

\bibitem{Friedrichs:1938}
K.~Friedrichs, \emph{{Über die Spektralzerlegung eines Integraloperators}},
  \href{https://doi.org/10.1007/BF01448941}{\emph{Math. Annal.} {\bfseries 115}
  (1938) 249}.

\bibitem{Friedrichs:1948}
K.O.~Friedrichs, \emph{{On the Perturbation of Continuous Spectra}},
  \href{https://doi.org/10.1002/cpa.3160010404}{\emph{Commun. Pure Appl. Math.}
  {\bfseries 1} (1948) 361}.

\bibitem{DLMF}
F.W.J.~Olver, A.B.~{Olde Daalhuis}, D.W.~Lozier, B.I.~Schneider, R.F.~Boisvert,
  C.W.~Clark et~al., \emph{{NIST Digital Library of Mathematical Functions
  (Release 1.1.11 of 2023-09-15)}},  (2023),
  \href{https://dlmf.nist.gov/}{https://dlmf.nist.gov/}.

\end{thebibliography}\endgroup
\addcontentsline{toc}{section}{References}

\end{document}